\newtheorem{prop}{Proposition}
\newtheorem{defin}{Definition}
\newtheorem{thm}{Theorem}
\newtheorem{cor}{Corollary}
\newtheorem{lemma}{Lemma}
\newtheorem{remark}{Remark}
\newcommand{\ket}[1]{|#1\rangle}
\newcommand{\bra}[1]{\langle #1|}
\newcommand{\braket}[2]{\langle #1|#2\rangle}
\newtheorem{ex}{Example}
\newcommand{\bmat}{\left[ \begin{matrix}}
\newcommand{\emat}{\end{matrix} \right]}
\newcommand{\Tr}{\mathrm{tr}}
\newcommand{\qed}{\hfill $\Box$ \vskip 2ex}
\newcommand{\tr} {\mbox{\rm tr}}
\newcommand{\Zbb}{\mathbb Z}
\newcommand{\cC}{\mathcal{C}}
\newcommand{\cD}{\mathcal{D}}
\newcommand{\cP}{\mathcal{P}}
\newcommand{\cF}{\mathcal{F}}
\newcommand{\cH}{\mathcal{H}}
\newcommand{\cN}{\mathcal{N}}
\newcommand{\cM}{\mathcal{M}}
\newcommand{\cR}{\mathcal{R}}
\newcommand{\cI}{\mathcal{I}}
\newcommand{\cS}{\mathcal S}
\newcommand{\identity}{{I}}
\newcommand{\Hi}{\mathcal{H}}
\newcommand{\Ei}{\mathcal{E}}
\newcommand{\Fi}{\mathcal{F}}
\newcommand{\Li}{\mathcal{L}}
\newcommand{\Ni}{\mathcal{N}}
\newcommand{\beq}{\begin{equation}}
\newcommand{\eeq}{\end{equation}}
\newcommand{\beqa}{\begin{eqnarray}}
\newcommand{\eeqa}{\end{eqnarray}}
\newcommand{\beqan}{\begin{eqnarray*}}
\newcommand{\eeqan}{\end{eqnarray*}}
\newcommand{\bea}{\begin{eqnarray}}
\newcommand{\eea}{\end{eqnarray}}
\begin{document}
\title{Dissipative Encoding of Quantum Information}

\author{Giacomo Baggio}
\affiliation{\mbox{Dipartimento di Ingegneria dell'Informazione, Universit\`a di Padova, via Gradenigo 6/B, 35131 Padova, Italy}}

\author{Francesco Ticozzi} 
\affiliation{\mbox{Dipartimento di Ingegneria dell'Informazione, Universit\`a di Padova, via Gradenigo 6/B, 35131 Padova, Italy}} 
\affiliation{\mbox{Department of Physics and Astronomy, Dartmouth College, 6127 Wilder Laboratory, Hanover, NH 03755, USA}}

\author{Peter D. Johnson}
\affiliation{Zapata Computing, Inc., 501 Massachusetts Avenue, Cambridge, MA 02139, USA}

\author{Lorenza Viola}
\affiliation{\mbox{Department of Physics and Astronomy, Dartmouth College, 6127 Wilder Laboratory, Hanover, NH 03755, USA}}

\begin{abstract}
We formalize the problem of dissipative quantum encoding, and explore the advantages of using Markovian evolution to prepare a quantum code in the desired logical space, with emphasis on discrete-time dynamics and the possibility of exact finite-time convergence. In particular, we investigate robustness of the encoding dynamics and their ability to tolerate initialization errors, thanks to the existence of non-trivial basins of attraction.  As a key application, we show that for stabilizer quantum codes on qubits, a finite-time dissipative encoder may always be constructed, by using at most a number of quantum maps determined by the number of stabilizer generators. We find that even in situations where the target code lacks gauge degrees of freedom in its subsystem form, dissipative encoders afford nontrivial robustness against initialization errors, thus overcoming a limitation of purely unitary encoding procedures. Our general results are illustrated in a number of relevant examples, including Kitaev's toric code. 
\end{abstract}

\date{\today}
\maketitle

\nopagebreak

\section{Introduction}

Implementing quantum information processing in physical devices requires that abstractly defined quantum information, carried by ideal information units (qubits), be represented using the available degrees of freedom \cite{Nielsen-Chuang,QECBook}. Central in this context is the concept of a {\em quantum code}: while in principle a code, ${\cal C}$, is simply any subset of the physical system's state space, ${\cal H}_P$, the details of the code are essential in determining the precise sense in which the quantum information of interest may be preserved against uncontrolled noise that physical systems are inevitably exposed to \cite{Blume-Kohout2010}. In particular, logically represented (``encoded'') quantum information may be intrinsically immune to the action of a set of errors -- for instance, by virtue of special symmetry properties, such as in decoherence-free subspaces or noiseless subsystems \cite{dfs,Knill2000} -- or, more generally, it may be actively protected through suitable recovery operations -- such as in quantum error-correcting codes \cite{QECBook}. Ultimately, in connection with the accuracy threshold theorem, quantum error correction (QEC) will be key in enabling large-scale fault-tolerant quantum computation, provided the noise is sufficiently well behaved \cite{QECBook,Preskill}.  

Among various approaches for finding quantum error-correcting codes that have been pursued, the {\em stabilizer formalism} has proved to be especially powerful in describing a large and important family of QEC codes and their error-correcting structure in a very compact form \cite{QECBook}. In its standard version, a stabilizer code protects quantum information through a suitable subspace encoding, that is, encoded quantum states are restricted to a {\em subspace} of ${\cal H}_P$ \cite{Gottesman1998}. Within the more general formulation of QEC theory afforded by the subsystem notion \cite{Knill1997,Knill2000} (also later referred to as ``operator QEC,'' OQEC \cite{Oqec}), a stabilizer code encodes the information to be protected in a {\em subsystem of a subspace} of ${\cal H}_P$ \cite{Poulin2005}. Notably, the presence of auxiliary ``gauge degrees'' of freedom in subsystem codes can both lead to simpler error-recovery procedures, with implications for quantum fault-tolerance \cite{Aliferis}, as well as to intrinsic tolerance of the code against additional errors \cite{Hideo2013}.

Clearly, ensuring that the logical information of interest is effectively encoded in the target code is of crucial importance for the proper functioning of QEC itself. Loosely speaking, in practical settings such an encoding task entails transferring the information of interest from a quantum state of some accessible yet unprotected ``upload'' physical subsystem, where it initially resides, to a state in a subspace or subsystem that represents its encoded counterpart. Since the dynamics implementing this transfer must work for each state of the upload qubits, encoding procedures must be devised without making explicit reference to a particular input state. From a control standpoint, designing a quantum encoder amounts to finding dynamics that implement a continuous family of specified one-to-one state transitions, from each state of the upload qubits to its corresponding codeword state -- which is a challenging problem in general. 

Methods for constructing encoding unitary dynamics within the circuit model of quantum computation have been extensively explored for stabilizer codes \cite{Gottesman1998,Cleve}, including their subsystem extensions \cite{Klappenecker2009}. Our interest here is to revisit the encoding problem from the perspective of using {\em engineered dissipative dynamics} \cite{altafini-introduction}, which have gained increasing significance for quantum tasks ranging from robust quantum state preparation, steady-state entanglement generation and cooling \cite{kraus,Ticozzi2012,BTV2012,cooling}, to open quantum system simulation \cite{opensys} and quantum-limited amplification \cite{clerk}. Notably, schemes for achieving dissipative quantum memories \cite{Pastawski2011}, dissipative quantum computation \cite{Verstraete-DQC}, and autonomous QEC \cite{Reiter,Home,Sarlette} have also been put forward, whereas continuous-time Markovian dynamics have been proposed in \cite{Dengis2014} to encode information in a specific albeit  paradigmatic stabilizer code -- Kitaev's toric code on the square lattice \cite{Kitaev2003}. 

In this work, we characterize the general features of {\em dissipative quantum encoders}, and propose a systematic way to construct Markovian dynamics for encoding stabilizer codes, with special emphasis on {\em discrete-time dynamics} and the possibility of exact finite-time convergence \cite{johnson-FTS}. {Our analysis both puts on a rigorous foundation and substantially expands the preliminary account of dissipative encoding in continuous time we provided in \cite{encoding-CDC}.} Aside from its intrinsic appeal as an alternative route to traditional unitary schemes, the use of dissipative encoding dynamics has the potential advantage of supporting non-trivial basins of attraction: formally, the control problem is akin to devising a continuous family of {\em many-to-one} state transitions, one for each state in the target code. This feature can be potentially exploited to tolerate errors and offer more flexibility on the initialization of the physical upload qubits. 

More specifically, the work is organized as follows. After providing some essential background on quantum codes in Sec. \ref{sec:back}, we formalize the general encoding task as a two-step procedure in Sec. \ref{sec:dqe}: the information to be encoded is first {\em initialized} from the logical level in a physical, upload subsystem, and then {\em encoded} into the target code ${\cal C}$ by a suitably engineered quantum evolution on the physical degrees of freedom. When the evolution is dissipative, investigating the general structure of the basin of attraction for the procedure leads naturally, in Sec. \ref{sub:basin}, to examine robustness against faulty initializations, represented by noise maps that act prior to the physical encoding step, and to an existential characterization of noise-tolerant dissipative encoders in terms of {\em compatible subsystem decompositions} (Theorem \ref{thm:noise}). After this general discussion, in Sec. \ref{sub:Markov} we formally introduce the important class of {\em Markovian dissipative encoders} based on both  continuous-time dynamics -- implemented by a semigroup (Lindblad) master equation, as in \cite{Dengis2014} -- or discrete-time dynamics -- implemented by a sequence of quantum maps, comprising a dissipative quantum circuit, in the spirit of \cite{johnson-FTS}.   

In the second part of the paper, we focus on stabilizer codes and establish a number of constructive results. In Sec. \ref{sec:FTE}, we show that discrete-time encoders able to dissipatively prepare the target code subspace in {\em finite time} exists, {using a number of steps determined by the number of stabilizer generators. Notably, we also show that such encoders always have non-empty basins of attractions even under additional constraints that may be relevant to the analysis}, such as invariance of the code and specific forms for the encoding maps and logical operators. While the latter requirement may make the proposed encoders look similar to a stabilizer QEC protocol, a fundamental difference stems from the fact that, in our setting, the ``errors'' that can be tolerated are not specified at the outset, but rather emerge from the form of the code and the initialization subsystems. Likewise, although also for our task the encoding maps entail measurement of stabilizer operators followed by unitary ``correction,'' the latter are chosen using different criteria than they are in QEC. Our construction is exemplified in a number of relevant stabilizer codes in Sec. \ref{sub:examples}. In particular, while some degree of robustness against initialization errors is known to be achievable with unitary dynamics in subsystem codes as long as non-trivial gauge factors can be identified \cite{Klappenecker2009}, our analysis makes it clear that using dissipative encoders may be the \emph{only} way to attain similar robustness for subspace codes, or whenever gauge qubits are not easily identifiable. The construction of a finite-time encoder for Kitaev's toric code is addressed separately in Sec. \ref{sub:tc}, by directly leveraging special geometric features this code enjoys. While both the continuous-time encoding dynamics of \cite{Dengis2014} and the dissipative quantum circuit we propose respect the same locality structure of the underlying stabilizer operators, our construction ensures that the target code space can be reached in a finite number of steps (proportional to the number of physical qubits) with zero error, in principle -- something which is never possible when convergence is exponential. We briefly conclude in Sec. \ref{sec:end}. 

\section{Background}
\label{sec:back}

\subsection{Quantum codes, subsystems, and the isometric approach}

Mathematically, in order to specify a code that carries quantum information associated to an abstract, {\em logical} quantum system with Hilbert space $\Hi_L$, it is necessary to identify a subset of states of the {\em physical} system, with corresponding Hilbert space $\Hi_P$. In this work, we take both $\Hi_{L}$ and $\Hi_{P}$ to be finite-dimensional, and consider the representation of the full set of density operators on $\Hi_L$, denoted by ${\cal D}(\Hi_L),$ using a subset ${\cal C}\subset{\cal D}(\Hi_P)$, with ${\cal C}$ being the {\em code}. The space of all linear (bounded) operators on, say, $\Hi_P$ is denoted by $\mathcal{B}(\Hi_P)$. Throughout the paper, we shall also write $\cal {A}\simeq\cal{B}$ to denote that two spaces, $\cal {A}, \cal {B}$, or decomposition thereof, are isomorphic. ${A}\simeq{B}$ denotes that operator $A$ is mapped into $B$ via the same isomorphism (that is, in terms of their matrix representatives, they are the same up to a suitable choice of basis).

The first kind of quantum codes that have been discovered and studied, both as active \cite{ShorCode,SteaneCode} or passive \cite{dfs} codes, are {\em subspace codes} \cite{QECBook}. These are associated to sets of states that have support on a subspace ${\cal H}_{\cal C}\equiv {\cal H}_S$ of the physical Hilbert space $\cH_P$, with $\cH_{S}\simeq \cH_L$, so that we can write
$$\cH_P = \cH_{S}\oplus \cH_{R} \simeq \cH_{L}\oplus \cH_{R}. $$
The quantum codewords are then the density operators supported on $\cH_S$, that is, ${\cal C}= \cD(\cH_{S})$, and the summand $\cH_R=\cH_S^\perp$ in this case. The code subspace is chosen so that the action of the intended noise, which is modeled as a completely-positive, trace-preserving (CPTP) map ${\cal M}$, is either trivial (e.g., states in a decoherence-free subspace are invariant under ${\cal M}$) or can be recovered by means of available measurements and correction operations -- namely, there exists a recovery CPTP map ${\cal R}$ such that $({\cal R}\circ {\cal M}) |\psi\rangle\langle\psi |= |\psi\rangle\langle\psi |$, for all $|\psi\rangle \in {\cH}_S$ \cite{Knill1997}. In the simplest setting where the error model ${\cal M}$ corresponds to independent errors on qubits, the {\em distance} $d$ of the code yields the minimum number of single-qubit operations needed to transform a codeword into another (the notion may be generalized to arbitrary error models \cite{Knill2000}). Thus, one may formally view passive codes (${\cal R}_{|{\cal C}} = {\cal I}_S$, where ${\cal I}$ is the identity map) as infinite-distance QEC codes. 

However, subspaces provably do not furnish the most general quantum codes possible. The {\em subsystem principle} for QEC, anticipated in \cite{Knill1997} and established in \cite{Knill2000,Viola2001,Knill2006,Ticozzi2010}, states that any (passive or active) quantum code can be associated to a general {\em subsystem} decomposition:
\begin{equation}
\cH_P = \cH_S \otimes \cH_F \oplus \cH_R \simeq\cH_{L}\otimes \cH_{F}\oplus \cH_{R}.
\label{sub}
\end{equation}
As above, $\Hi_S$ is isomorphic to the space to be encoded but, in a subsystem code, this space appears as a factor, in tensor product with another Hilbert space $\Hi_F,$ representing a {\em gauge} subsystem on which ${\cal M}$ can act without affecting the information encoded in ${\cal C}= \cD(\cH_{S})$. More precisely, we say that a state of the physical system, $\overline\rho\in\cD(\cH_P)$, is {\em initialized in $\cH_S$ with state $\rho\in\cD(\cH_S)$, and gauge state $\tau_{F}\in\cD(\cH_{F})$}, if $\overline\rho\simeq\rho\otimes \tau_{F}\oplus 0_{R},$ where $0_R$ denotes the zero operator on $\cH_R$. In particular, we say that $\overline\rho\in\cD(\cH_P)$ is  initialized in a subsystem pure state if the above equation holds with $\rho=|\psi\rangle\langle\psi |$, for some $|\psi\rangle \in {\cH}_S$. 

When a gauge state $\tau_F=\sum_{j=1}^f p_j\ket{\phi_j}\bra{\phi_j}$ is specified, one can think of a subsystem code as a collection of $f$ orthogonal subspace codes of the form $\Hi_L\otimes{\rm span}\{\ket{\phi_j}\}$, in each of which one initializes a fraction $p_j$ of the total probability. Thus, $\tau_F$ does not carry any logical information, it only specifies how the information is distributed over the set of orthogonal subspace codes. It is then possible to prove that information encoded into subsystem codes is intrinsically robust with respect to changes of the co-factor state, with the restriction of ${\cal M}$ to $\cH_S \otimes \cH_F$ obeying ${\cal M}_{|\cH_S \otimes \cH_F} = {\cal I}_S \otimes {\cal F}$, for some CPTP map ${\cal F}$ on $\cH_F$ \cite{QECBook,Knill2006}. 

In principle, it is possible to construct subsystem codes that extend a given subspace code $\cH_P\simeq\cH_{L}\oplus \cH_{R}.$ The new subsystem is obtaining by identifying $f-1$ isomorphic and mutually orthogonal copies of $\cH_{S}$ inside $\cH_{R}.$  However, in practical cases it may be difficult to find such copies so that they are collectively recoverable after the action of ${\cal M}$. Nonetheless, we can always see a subspace code as a subsystem code with a {\em one-dimensional} gauge co-factor. In light of the above, in the more theoretical part of the paper (Sec. \ref{sec:dqe}), we shall work directly with subsystem codes, and see subspace codes as a particular case.

Conversely, given a subsystem code, it is possible to obtain a subspace code by fixing the gauge, that is, by losing the freedom in the gauge state. By imposing that $\tau_F$ be a specified pure state $\ket{\phi}\bra{\phi}$, the only subspace that is allowed to carry information in $\Hi_L\otimes\Hi_F$ is $\Hi_L'\simeq\Hi_L\otimes{\rm span}\{\ket{\phi}\}.$ We illustrate some of these ideas for the simplest quantum code, the repetition code, which will also be revisited and used as a guiding example in the rest of the paper.

\begin{ex}[The repetition code as a subsystem code]
\label{ex:rep1}
{\em This 3-qubit code is usually described as a subspace code that encodes one qubit protected with respect to independent single bit-flip errors, $\Hi_{\cal C}=\textrm{span}\{\ket{000},\ket{111}\}$ and errors act via ${\cal M} (\rho)= (1-p) \rho+(p/3) \sum_{\ell=1,2,3} X_\ell \rho X_\ell$, $\rho\in\cD(\cH_P)$, where $X_\ell$ denotes the Pauli $X$ acting on qubit $\ell$, $X_1=XII, X_2=IXI, X_3=IIX$, and $p <1/2$ is the error probability per qubit. Revisiting the code in the subsystem picture of Eq. \eqref{sub} allows to clearly identify the action of noise that can be effectively corrected, as well as the required error correction \cite{Knill1997,Viola2001}. The code subspace $\Hi_{\cal C}$ can be associated to a natural subsystem decomposition $\Hi_{P} \simeq \Hi_{L} \otimes \Hi_F = {\mathbb C}^2 \otimes {\mathbb C}^4$, induced by the unitary change of basis $U_L$ defined by
\begin{equation}
U_L\ket{abc}\equiv \ket{x}\otimes \ket{yz}, \qquad a,b,c\in\{0,1\},
\label{rep}
\eeq 
\noindent 
where $x$ is the majority count of the string $abc,$ while $yz$ indicates the binary location in which $abc$ differs from $xxx$, with $00$ indicating no differences.  With respect to this subsystem decomposition, the original code subspace corresponds to $\Hi_{\cal C}\simeq \Hi_{L} \otimes {\rm span}\{|00\rangle\}$, and its codeword states are uniquely associated to the states of the subsystem code ${\cal C}= \{ \rho\otimes \ket{00}\bra{00}\},$ with $\rho\in{\mathcal{D}(\Hi_L)}$. It is easy to see that the action of any noise that does not affect the majority count with respect to the subsystem decomposition \eqref{rep}, namely, of evolutions of the form ${\cal M}=\cI_L \otimes \Fi_F$, can be corrected by a recovery map ${\cal R}$ that resets (``cools'') the co-factor gauge qubits back to $\ket{00}.$ That is, in this representation, the code ${\cal C}$ is fixed -- is a noiseless subsystem \cite{Knill2000} -- under ${\cal M}\circ{\cal R}$. }
\end{ex}

In \cite{Ticozzi2010}, a natural operational interpretation for the subsystem principle is provided, which contributes to clarify connections among various notions of error protection and correction. A quantum code ${\cal C}$ can be identified as the image of a  CPTP map $\Phi$ from the  logical degrees of freedom $\Hi_{L}$ into the physical Hilbert space $\Hi_{P},$  that preserves the \emph{distinguishability} of states. Explicitly, the image of a CPTP map $\Phi$ defines a code if for all $\rho_1,\rho_2\in\mathcal{D}(\Hi_L)$ and $p\in[0,1]$, $$\|p\Phi(\rho_1)-(1-p)\Phi(\rho_2)\|_1=\|p\rho_1-(1-p)\rho_2\|_1,$$
where $\| A \|_1\equiv \Tr [|A|] = \Tr[ \sqrt{A^\dagger A } ]$. This requirement is equivalent to saying that $\Phi$ is a {\em trace-norm isometric embedding} of $\mathcal{B}(\Hi_{L})$ into  $\mathcal{B}(\Hi_P)$.

As shown in \cite{Ticozzi2010}, this isometry property, together with linearity and the CPTP requirements, are both sufficient and necessary to ensure that the image of $\Phi$ is a subsystem encoding: Any $1$-isometric CPTP embedding $\Phi$ induces a subsystem decomposition $\Hi_P\simeq\Hi_L\otimes\Hi_F\oplus\Hi_R$, where $\Phi(\rho)\simeq\rho\otimes\tau_F\oplus 0_R$ for some given $\tau_F\in\mathcal{B}(\Hi_F)$. A quantum code, then, may be described as $\mathcal{C}\equiv\Phi(\mathcal{D}(\Hi_L))$, and its codewords are the states of the form $\rho\otimes\tau_F\oplus 0_R$, for a given $\tau_F$. Preservation of distinguishability is also shown to be a necessary and sufficient requirement for a code undergoing a noisy physical evolution to be perfectly correctable.

\subsection{Basics of stabilizer formalism}
\subsubsection{Stabilizer codes}

Let $\cP_{n}$ denote the $n$-qubit Pauli group \cite{Nielsen-Chuang}.  A {\em stabilizer (subspace) code} {$\cC$ on $(\mathbb{C}^2)^{\otimes n}$ is supported on the common $+1$-eigenspace $\cH_\cS$} of a set of commuting operators $\{S_{k}\}_{k=1}^{r}\subseteq \cP_{n}$, which are called {\em stabilizer operators}.  These operators generate an Abelian subgroup $\cS$ of $\cP_{n}$, the so-called stabilizer (sub)group. Thus, $\cH_\cC =\cH_\cS \equiv \text{span}\{ |\psi\rangle\,|\, S_k|\psi\rangle=|\psi\rangle, \, k=1,\ldots, r\}$, implying that the code space is invariant under the action of $\cS$. Thanks to the properties of Pauli matrices, the dimension of $\cH_\cS$ is $2^{n-r}$: This subspace can then be used to encode $n-r$ logical qubits (also often called ``virtual'' qubits, as they need not be in a direct relation to the physical ones). To this aim, we define a set of {\em logical Pauli operators}, say, $\{\overline{X}_{k},\overline{Z}_{k}\}_{k=1}^{n-r}$, acting on $\cH_\cS$. These operators must commute with all stabilizer generators and therefore belong to the centralizer of $\cS$; thanks to the properties of the stabilizer group, the latter corresponds to the \emph{normalizer} $\cN(\cS).$ For a Pauli subgroup, the latter is the set of operators that leave all the elements of $\cS$ invariant under conjugate (adjoint) action, that is, $\cN(\cS)\equiv \{P \in \cP_n\,|\, PSP^\dag=S, \, \forall S\in \cS\}$. Clearly, $\cS \subset \cN(\cS)$. The operators which are in $\cN(\cS)-\cS$ generate a subgroup of the Pauli group, called the {\em logical subgroup}, since it corresponds to operators that affect non-trivially the information encoded in the logical qubits. 

Given a Pauli subgroup, following \cite{QECBook}, we say that a set of generators is a {\em canonical basis} if it is composed by pairs of operators $\hat X_\ell,\hat Z_\ell$ (virtual $X$ and $Z$ operators) such that they anti-commute, $\{\hat X_\ell,\hat Z_\ell\}=0$, while they commute for $j\neq\ell,$  $[\hat X_\ell,\hat Z_j]=0$, in addition to $[\hat X_\ell,\hat X_j]=0$ and $[\hat Z_\ell,\hat Z_j]=0$, for any pair of indexes. When a subgroup admits a canonical set of generators, it can be seen as a (virtual) qubit system. Being $\cS$ a stabilizer subgroup, it can be shown that its centralizer (hence, its normalizer) is generated by $iI$, $\cS$ itself, and a canonical basis of $n-r$ $\hat X_\ell,\hat Z_\ell$ pairs. The $\ell$-th logical qubit is then naturally associated to the corresponding pair of logical operators $\hat X_\ell,\hat Z_\ell$. It is worth stressing that the choice of basis is, in general, highly not unique. Also, the use of a suitable symplectic representation is especially convenient in allowing one to check for commutativity via simple linear-algebraic manipulations (see also Appendix \ref{sympl}).

As for classical linear codes, the key properties of a (binary) stabilizer code are described as a string of three parameters \cite{QECBook}, $[[n,k,d]],$ where $n$ is the total number of physical qubits, $k=n-r$ the number of logical qubits encoded (thus, dim$(\cH_\cC)=2^k$), and $d$ denotes the distance of the code, which for a stabilizer code is given by the minimum weight of any Pauli operator (other than the identity) that commutes with all the stabilizer generators. The stabilizer formalism also permits a nice characterization of the QEC criteria to be given, namely, $\{E_a\}$ is a set of correctable errors if $E^\dagger_a E_b \not \in \cN(\cS) -\cS$ for all possible error pairs \cite{Knill1997,QECBook}.

\begin{remark}
\label{rem1}
{\em The 3-bit code of Example \ref{ex:rep1} may be easily described within the stabilizer formalism, by letting $\cS$ to be generated, for example, by the two stabilizer operators $S_1= ZZI$ and $S_2=IZZ$  \cite{Viola2001}. The operator $\overline{X}=XXX$ then acts like an encoded $X$ operation on $\cH_\cC$, whereas $\overline{Z}=ZII$ acts like an encoded $Z$ operation, and the repetition code is $[[3,1,1]]$. The two QEC codes independently discovered by Shor \cite{ShorCode} and Steane \cite{SteaneCode} for correcting arbitrary independent single-qubit errors are also both stabilizer codes, corresponding to parameters $[[9,1,3]]$ and $[[7,1,3]]$, respectively; likewise, the smallest, $5$-bit code that also correct for arbitrary single-qubit errors  \cite{PerfectCode} corresponds to $[[5,1,3]]$, as we will also further discuss in Sec. \ref{sub:examples}. } 
\end{remark}

\subsubsection{Subsystem codes}
\label{sub:sc}

The stabilizer formalism has also been extended to codes of the general subsystem form and operator QEC \cite{Poulin2005}. This is done by first specifying a $2^{n-r}$-dimensional subspace $\cH_\cS$ as described in the previous subsection, namely, by specifying $r$ commuting Pauli operators, which generate a stabilizer subgroup $\cS$. We then find  the Pauli subgroup associated to the centralizer of the stabilizer group, and in particular a canonical set of the centralizer generators: As before, these include $iI$ and $\cS,$  and $n-r$ virtual qubits identified by the $\hat X_\ell,\hat Z_\ell$ pairs. In this case, however, only $s$ of these qubits are assigned to encode logical information, while the remaining $n-r-s$ are gauge qubits. Recall that identification of the virtual qubits in the stabilized subspace is, in general, highly non-unique: In some cases, this freedom can be used to identify the gauge qubits so that $\cH_\cS \simeq \cH_L \otimes \cH_F$, with the errors affecting the physical system, restricted to $\cH_\cS$, acting non-trivially {\em only} on $\cH_F$. This implies that one has to employ QEC only to maintain the information inside $\cH_\cS$, as the errors on the gauge qubits do not affect the information encoded in $\cH_L$.

If the stabilizer subsystem code $\cC =\cD(\cH_L)$ encodes $k$ logical qubits into $n$ physical qubits with distance $d$, using $g$ gauge qubits, it is said to be a $[[n,k,g,d]]$ code \cite{Poulin2005,QECBook}. Similarly to the general case, one can obtain stabilizer subspace codes from subsystem codes, and vice versa. In fact:

(1) Every subsystem code can be turned into a standard (subspace) stabilizer code by extending the stabilizer group with extra $n-r-s$ commuting operators that act on the gauge qubits. These impose that the gauge qubits be in a specified pure state, and transform the $[[n,k,g,d]]$ subsystem code into a standard $[[n, k, d]]$ stabilizer code. In doing this, there is a price to pay: The resulting subspace code must now be actively correcting for a new set of error operators, determined by how we choose the additional stabilizers, whereas in the original subsystem code, the same errors did not need to be corrected, since they acted on the gauge qubits.

(2) Reversing the procedure, every subspace stabilizer code, say a $[[n, k, d]]$ code, can be viewed as a subsystem $[[n,k,g,d]]$ code for {\em some} number $g \geq 0$ of gauge qubits, where the latter have been prepared in a pure state. However, in this case the choice of the gauge cannot be arbitrary, or the ability of the code to tolerate certain errors may be lost. It is not easy to determine the maximum $g$ for a given subspace code, though upper bounds exist \cite{Poulin2005}. For example, to the best of our knowledge neither the 5-bit code nor Steane's 7-bit code allow for a subsystem representation with non-trivial gauge qubit, that is, for these codes $g=0$. For Shor's $[[9,1,3]]$ code instead, Poulin provides in \cite{Poulin2005} a $[[9,1,3,3]]$ version of the code with 3 gauge qubits suggesting it is the maximal one, but a $[[9,1,4,3]]$ version with 4 gauge qubits was also later constructed \cite{Breuckmann-thesis}. 

Two main advantages of subsystem codes are that the measurements needed to extract syndrome information are, in general, sparser and the errors only need to be corrected modulo gauge freedom, which may improve fault-tolerance thresholds \cite{Aliferis,Bacon}. More directly relevant to our discussion, subsystem codes may also allow for a certain degree of robustness in the encoding procedure \cite{Klappenecker2009} -- as we shall expand upon after properly defining the encoding task in the next section.

\section{Dissipative quantum encoders}

\subsection{The encoding task}
\label{sec:dqe} 

We now specifically focus on the problem of encoding information in the correctable subsystem. We shall assume that the quantum information (i.e., a quantum state) of interest is initially stored in an {\em upload subsystem} that is isomorphic to the code subsystem and easy to prepare and manipulate, yet unprotected from noise. Typically, such a subsystem will be either directly identifiable with some physical qubits, or emerging from system-specific symmetries and control capabilities \cite{Knill2000,Viola2001}. 

\begin{ex}[Encoding the repetition code]
\label{ex:rep1-enc}
{\em In the repetition code example, it is natural to consider one of the physical qubits as the upload subsystem: Without loss of generality, we choose the first physical qubit.  Then, a simple (unitary) option for translating the initial information into a codeword is offered by any unitary $U_P$ such that
\[U_P\ket{x}\otimes\ket{00}=\ket{xxx}, \qquad x\in\{0,1\} .\]
In the circuit model of quantum computation, the ``encoder'' $U_P$ is realized through a unitary circuit involving single- and two-qubit gates (e.g., it may be obtained from a sequence of CNOT gates \cite{Nielsen-Chuang}). Importantly, this unitary encoding requires the physical qubits that are not used as upload qubits to be prepared in the pure state $\ket{00}.$ This requirement can be relaxed in two ways: 
(1) If we consider the subsystem version of the repetition code, the state of the factor state $\tau_F$ does not affect the encoded information. Hence, we can use as encoder {\em any} unitary such that
\[\tilde U_P\ket{x}\otimes\ket{\phi}=U_L\ket{x}\otimes\ket{\phi'},\]
where $U_L$ is the unitary transformation given in Eq. \eqref{rep} and $\ket{\phi},\ket{\phi'}$ any pair of pure states on two qubits. This implies that $\tilde U_P$ is of the form $U_L (I_1\otimes U_{23}),$ and that the initial state of the second and third qubits is irrelevant to the encoding.  
(2) Otherwise, we can include a (necessarily dissipative) initialization step in the encoding protocol, leading one to consider the CPTP map:
\[\Phi_P ( \rho\otimes\tau_F)=U_P \left[ \,\tr_{2,3}(\rho\otimes\tau_F)\otimes \ket{00}\bra{00} \,\right] U_P^\dag .\]
The latter achieves the correct encoding irrespective of the initialization factor state $\tau_F$ (and, in fact, it correctly encodes the reduced state of the first qubit in $\cC$ even if the input state is not factorized). This illustrates how dissipative dynamics can, in principle, provide additional robustness with respect to errors in the initialization phase of an encoding procedure. Systematic ways to follow the unitary approach (1) with quantum circuits for stabilizer subsystem codes have been proposed in \cite{Klappenecker2009}. Instead, we shall focus on the dissipation-based approach (2).}
\end{ex}

Inspired by the above example, and in line with typical implementations of encoding protocols, 
we consider the task of {\em encoding information} in a quantum code $\mathcal{C}=\Phi(\mathcal{D}(\Hi_L))$ associated to a 1-isometry $\Phi$ and a general subsystem decomposition $\cH_{S}\otimes \cH_{F}\oplus\cH_{R},$ as entailing two steps:
\medskip

{\bf Step 1: Logical encoding (initialization).} The abstract information to be encoded, a density 
operator $\rho\in\cD(\Hi_L),$ is first ``uploaded'' in a physical subsystem that is easy to manipulate and initialize in the desired state, but offers no protection against noise. In full generality, this is done by identifying a subsystem $\Hi_{S'}$ of $\cH_{P}=\cH_{S'}\otimes \cH_{F'}\oplus\cH_{R'},$ with $\Hi_{S'}\simeq\Hi_{L}.$  After this initial information upload, the state of the physical system is some density operator
\begin{equation} 
\Phi_L(\rho)=\rho_{P'}\simeq \rho\otimes \tau_{F'} \oplus 0_{R'}, 
\label{enc1}
\eeq
where $\tau_{F'}\in\cD(\Hi_{F'}).$ The map $\Phi_L$ between the abstract state $\rho$ to be encoded and the initialized state $\rho_{P'}$ must be a $1$-isometric CPTP embedding in order for the information to be retrievable. Note that $\Hi_{F'}$ in the initialization subsystem does not need to be isomorphic to $\Hi_F$ of the code.

\medskip 

{\bf Step 2: Physical encoding.} A CPTP evolution on the physical degrees of freedom, $\Phi_P$, transfers the initialized state to the 
corresponding encoded state in the code according to
\beq 
\Phi_P(\rho_{P'})=\overline\rho\simeq\rho\otimes \tau_F \oplus 0_R,
\label{enc2}
\eeq
where the last state is now correctly initialized in ${\cal C}.$ This $\Phi_P$ can be either obtained unitarily (via a quantum circuit) or dissipatively.  The full {\em encoding protocol} is associated to the concatenation $\Phi =\Phi_P\circ \Phi_L$.

\medskip

In this paper, we shall assume that a nominal $\Phi_L$ is given, and focus on the task of {designing the physical encoder $\Phi_P,$} and its robustness with respect to the initialization. At its core, this task requires that $ \Phi_P\circ\Phi_L(\rho)=\overline\rho.$ Explicitly, in terms of the subsystem decompositions in Eqs. \eqref{enc1}-\eqref{enc2} corresponding to the initialized information and the code, the latter reads: 
\begin{equation}
\Phi_P(\rho\otimes\tau_{F'}\oplus 0_{R'})=\overline\rho\simeq\rho\otimes\tau_F\oplus 0_{R}, 
\label{enc3}
\end{equation}
that is, the physical encoder $\Phi_P$ must map the information uploaded in the initialization step to the correct corresponding state in the code.  

On the one hand, a key difference between unitary and dissipative dynamics for this problem is that the latter allow for contractive, irreversible evolution, such that more than one input state can be mapped to the same target. On the other hand, unitary evolution preserves distinguishability. Formally, we are led to a definition of dissipative encoding that highlights this feature by introducing the basin of attraction of each code state.

\begin{defin}[Dissipative encoder] A (physical) \emph{dissipative encoder} for a code ${\cal C}$ is a (non-unitary) CPTP map $\Phi_P:\mathcal{D}(\Hi_P) \rightarrow\mathcal{B}(\Hi_P)$ as in Eq. \eqref{enc2}, such that for every encoded state $\overline\rho\in{\cal C}, $ the initialized physical state in Eq. \eqref{enc1} obeys $\rho_{P'} \in {\cal B}_{\overline\rho},$ where the {\em basin of attraction} ${\cal B}_{\overline\rho}=\Phi_P^{-1}(\overline\rho)$ is the pre-image of $\overline\rho.$ 
\end {defin}
Robustness in the encoding thus corresponds to having basins of attraction that contain more than just $\rho_{P'}.$ The advantage of using dissipative encoders is essentially related to having non-trivial ${\cal B}_{\overline\rho}$: These allow for tolerance with respect to (certain) errors in the initialization of the upload subsystems, as well as freedom in the design of the map $\Phi_L,$ thus offering potentially easier initialization procedures.

\begin{remark}[Unitary encoders can exhibit robustness only for subsystem codes] {\em Let the map $\Phi_P(\cdot)=U_P\cdot U^{\dagger}_P$ be a unitary physical encoder. As in the repetition code example, the to-be-encoded quantum information $\ket{\psi}$ is initialized in the upload subsystem, while the remaining part of the system is initialized in some fixed pure state $\ket{\phi}$. Then, $\Phi_P$ is a global unitary transformation designed to map $\ket{\psi}\otimes\ket{\phi}$ into the encoded state $\ket{\overline{\psi}}$. In general, the success of the encoding requires the remaining part of the system to be sufficiently well-prepared in $\ket{\phi}$. As we have seen for the repetition code, the unitary encoding is not generally robust to errors in the initialization of $\ket{\phi}$. For example, if the remaining system is afflicted by an error $E$ which transforms $\ket{\phi}$ to an orthogonal state, $\bra{\phi}E\ket{\phi}=\braket{\phi}{\phi'}=0$, the subsequent encoded state is sure to be orthogonal to the intended encoded state. However, a unitary encoder {\em can} be robust with respect to the initialization of some of the subsystems that are not the upload ones, corresponding to the gauge degrees of freedom of the code. The encoded state is then understood within the subsystem code/operator QEC framework, where the gauge degrees of freedom can be mixed -- as opposed to subspace codes that, as we recalled, must correspond to pure gauge states. On the other hand, if a subspace code does not admit a subsystem decomposition with some non-trivial gauge degrees of freedom, the only option to allow for non-pure states in the upload subsystem is to use dissipative encoders at the physical encoding stage.}
\end{remark}

\subsection{Dissipative encoders and their basin of attraction}
\label{sub:basin}

In view of the above discussion, it becomes important to investigate the general structure of the basin of attraction for a subsystem code, and understand how such a basin can be made as large as possible -- at least when no other constraints are in place on the allowed dynamics. 

\subsubsection{Faulty initializations} 

Assume that in the initialization step, the physical system is expected to be mapped in a state $\rho_{P'}$, according to the subsystem structure $\cH_{P}\simeq \cH_{L'}\otimes \cH_{F'}\oplus\cH_{R'}$, with $\dim(\cH_{L'})=\dim(\cH_{L})$, via a given {\em nominal} $\Phi_L$. To explore the structure of the potential basin of attraction, it is convenient to consider a second CPTP map, $\tilde\Phi_{L},$ which represents a {\em faulty initialization} and maps the logical information in a different state $\rho_{P''}$. We first determine what properties such an initialization must have in order for a physical encoder $\Phi_P,$ originally designed to encode the outputs of $\Phi_L,$ to be able to also map the faulty initialized $\rho_{P''}$ of $\tilde\Phi_{L}$ onto the correct encoded state $\bar{\rho} \in \cC$.

As already remarked, it follows from the results in \cite{Ticozzi2010} that $\cal C$ must be a $1$-isometric embedding of the logical information in $\cH_P$ via an isometry $\Phi$. Then it is immediate to see that a dissipative encoder $\Phi_P$ for the faulty initialization exists, meaning it satisfies $\Phi_P\circ\tilde\Phi_{L}=\Phi,$ if and only if $\tilde\Phi_{L}$ is itself a trace-norm isometry. However, this is not sufficient to our aim, since we wish to have the {\em same} $\Phi_P$ to be able to encode the output of $\Phi_L$ as well. We thus give the following:
\begin{defin}
\label{def:ft}
We say that a dissipative encoder $\Phi_P,$  such that $\Phi_P\circ\Phi_L(\rho)=\overline \rho,$ {\em tolerates the faulty initialization} $\tilde\Phi_L$ if
\beq
\label{eq:tolerate-pre}
\Phi_P\circ\tilde\Phi_{L}(\rho)=\Phi_P\circ\Phi_{L}(\rho),\quad \forall\rho\in\cD(\cH_{L}). 
\eeq
\end{defin}

\smallskip

\noindent 
Note that the physical encoder is required to be the same on the left and the right of Eq. \eqref{eq:tolerate-pre}. In what follows, it will be useful to consider a suitable decomposition of the faulty initialization. Namely, we can think that $\tilde \Phi_L$ is the concatenation of the nominal map $\Phi_L$, followed by a noise map ${\cal N}.$ Assuming this structure of $\tilde \Phi_L$ does not imply a loss of generality. In fact, both $\tilde \Phi_L$ and $\Phi_L$ are isometric embedding, otherwise one would have degradation of distinguishability and impossibility of exact decoding. Hence, the image of both maps must correspond to a subsystem representation of the logical degrees of freedom. Any such subsystem representation can be easily obtained from another one by using a map ${\cal N}$ which maps the states initialized in one subsystem in the corresponding ones of the other, and it is suitably completed to be CPTP. If we are considering more than one possible faulty initialization, we can then think that the corresponding $\Ni$ corresponds to a certain family of ``initialization errors'', which act with a certain probability. In this case, we demand that $\Phi_P$ correctly encodes {\em both} the intended $\Phi_L$ and the faulty initializations $\tilde\Phi_L=\Ni\circ\Phi_L$. 
\begin{defin}
\label{def:ft-noise}
We say that a dissipative encoder $\Phi_P$, such that $\Phi_P\circ\Phi_L(\rho)=\overline \rho,$ $\forall\rho\in\cD(\cH_{L}),$ {\em tolerates the noise map} $\cN$ if
\beq
\label{eq:tolerate}
\Phi_P\circ\cN\circ\Phi_L(\rho)=\Phi_P\circ\Phi_{L}(\rho),\quad \forall\rho\in\cD(\cH_{L}).
\eeq
\end{defin}

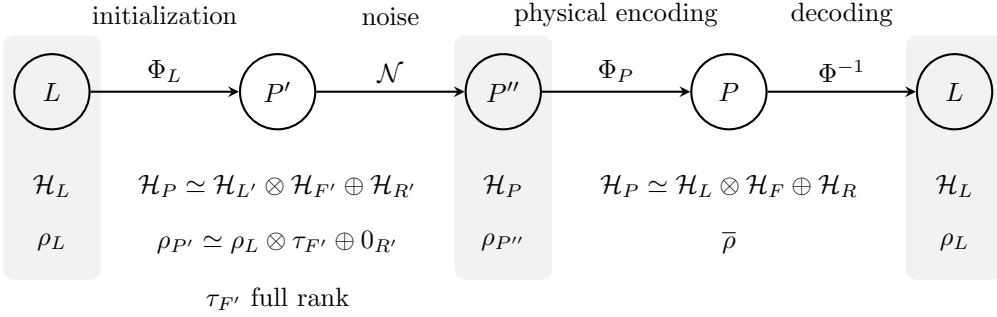
\begin{figure}[t]
\label{figurescheme}
\begin{tikzpicture}
\draw[thick,fill=gray!10,draw=none,rounded corners] (-0.65,0.75) -- (-0.65,-2.5) -- (0.65,-2.5) -- (0.65,0.75) -- cycle;
\draw[thick,fill=gray!10,draw=none,rounded corners] (5.35,0.75) -- (5.35,-2.5) -- (6.65,-2.5) -- (6.65,0.75) -- cycle;
\draw[thick,fill=gray!10,draw=none,rounded corners] (11.35,0.75) -- (11.35,-2.5) -- (12.65,-2.5) -- (12.65,0.75) -- cycle;
\node[thick,circle,draw,minimum size=1cm] (A) at (0,0) {$L$};
\node[thick,circle,draw,minimum size=1cm] (B) at (3,0) {$P'$};
\node[thick,circle,draw,minimum size=1cm] (C) at (6,0) {$P''$};
\node[thick,circle,draw,minimum size=1cm] (D) at (9,0) {$P$};
\node[thick,circle,draw,minimum size=1cm] (E) at (12,0) {$L$};
\node[minimum size=1cm,below of=A,node distance=1.25cm]  (Ab) {$\cH_{L}$};
\node[minimum size=1cm,below of=B,node distance=1.25cm]  (Bb) {$\cH_P\simeq\cH_{L'}\otimes \cH_{F'}\oplus\cH_{R'}$};
\node[minimum size=1cm,below of=C,node distance=1.25cm]  (Cb) {$\cH_P$};
\node[minimum size=1cm,below of=D,node distance=1.25cm]  (Db) {$\cH_{P}\simeq\cH_{L}\otimes \cH_{F}\oplus\cH_{R}$};
\node[minimum size=1cm,below of=E,node distance=1.25cm]  (Eb) {$\cH_{L}$};
\node[minimum size=1cm,below of=Ab,node distance=0.75cm]  (Ac) {$\rho_{L}$};
\node[minimum size=1cm,below of=Bb,node distance=0.75cm]  (Bc) {$\rho_{P'}\simeq \rho_{L}\otimes\tau_{F'}\oplus 0_{R'}$};
\node[minimum size=1cm,below of=Bc,node distance=0.75cm]  (Bc) {$\tau_{F'}$ full rank};
\node[minimum size=1cm,below of=Cb,node distance=0.75cm]  (Cc) {$\rho_{P''}$};
\node[minimum size=1cm,below of=Db,node distance=0.75cm]  (Dc) {$\overline\rho$};
\node[minimum size=1cm,below of=Eb,node distance=0.75cm]  (Ec) {$\rho_{L}$};
\draw[thick,-stealth] (A) -- node[above] (phi) {$\Phi_L$}  (B);
\draw[thick,-stealth] (B) -- node[above] (N) {$\cN$} (C);
\draw[thick,-stealth] (C) -- node[above] (Enc) {$\Phi_P$} (D);
\draw[thick,-stealth] (D) -- node[above] (D) {$\Phi^{-1}$} (E);
\node[minimum size=1cm,above of=phi,node distance=0.75cm]  {\small initialization};
\node[minimum size=1cm,above of=N,node distance=0.75cm]  {\small noise};
\node[minimum size=1cm,above of=Enc,node distance=0.75cm]  {\small physical encoding};
\node[minimum size=1cm,above of=D,node distance=0.75cm] {\small decoding};
\end{tikzpicture}
\vspace*{-3mm}
\caption{Schematic picture of an encoding protocol under faulty initialization. The logical state $\rho_{L}\in\cD(\cH_{L})$ is first initialized in the physical subsystem $\cH_{P}\simeq \cH_{L'}\otimes \cH_{F'}\oplus\cH_{R'}$, with $\dim(\cH_{L'})=\dim(\cH_{L})$, through the logical encoder $\Phi_L$. In a second stage, the initialized state $\rho_{P'}\in\cD(\cH_{P})$ is corrupted by the action of a noise map $\cN$, yielding the ``noisy'' embedded states $\rho_{P''}\in\cD(\cH_{P})$. Then, $\Phi_P$ maps $\rho_{P''}$ into an encoded state $\overline\rho$. Finally, decoding extracts the information encoded in the state $\overline\rho$. }
\label{faulty}
\end{figure}

With reference to the above framework, also depicted in Fig. \ref{faulty}, we shall investigate what properties $\tilde \Phi_L$, or equivalently the noise maps $\cN,$ must possess in order to be tolerated by an encoding map $\Phi_P$. As a first result, it is straightforward to establish the following:

\begin{lemma}
\label{lem:convex}
If the encoding map $\Phi_P$ tolerates the noise action $\cN$, then it tolerates any convex combination of the noise and the identity operator, i.e. $\cM_{\lambda}\equiv \lambda\cN+(1-\lambda)\cI$, with $\lambda\in[0,1]$.
\end{lemma}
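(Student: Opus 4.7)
The plan is essentially a direct linearity argument; there is no real obstacle here, so the proof sketch is short. I would start by observing that all quantum maps under consideration (both $\Phi_P$ and $\Phi_L$, as well as $\cN$ and $\cI$) are CPTP, hence linear on the operator space $\mathcal{B}(\cH_P)$, and convex combinations of CPTP maps are themselves CPTP. In particular, $\cM_\lambda = \lambda\cN + (1-\lambda)\cI$ is a well-defined CPTP noise map on $\cD(\cH_P)$ for every $\lambda \in [0,1]$, so Definition \ref{def:ft-noise} applies meaningfully to it.

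Next I would unfold the definition. For every $\rho \in \cD(\cH_L)$, the linearity of composition gives
\[
\Phi_P \circ \cM_\lambda \circ \Phi_L(\rho) \,=\, \lambda\, \Phi_P \circ \cN \circ \Phi_L(\rho) \,+\, (1-\lambda)\, \Phi_P \circ \Phi_L(\rho).
\]
By the hypothesis that $\Phi_P$ tolerates $\cN$, the first term on the right-hand side equals $\Phi_P \circ \Phi_L(\rho)$. Substituting and collecting the coefficients, the right-hand side reduces to $[\lambda + (1-\lambda)]\,\Phi_P\circ\Phi_L(\rho) = \Phi_P\circ\Phi_L(\rho)$, which is exactly the tolerance condition \eqref{eq:tolerate} for the noise map $\cM_\lambda$. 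Since $\rho$ was arbitrary, this establishes the claim.

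The only thing worth double-checking is the linear-combination step at the level of density operators: one has to use linearity of $\Phi_P$ on the \emph{output} side (a convex combination of states is a state), and linearity of $\cM_\lambda$ on the \emph{input} side (the image $\Phi_L(\rho)$ being a legitimate input). Both are immediate from the fact that CPTP maps extend to linear maps on all of $\mathcal{B}(\cH_P)$, so no additional positivity or normalization check is required beyond noting that $\cM_\lambda$ is itself CPTP. Thus the lemma follows in a few lines and will also serve later to convert any discrete family of tolerated noise maps into a continuous family by convex interpolation with $\cI$.
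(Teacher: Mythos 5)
Your proof is correct and follows exactly the same route as the paper's: expand $\cM_\lambda$ by linearity of $\Phi_P$, apply the tolerance hypothesis to the $\cN$ term, and recombine the coefficients. The extra remarks about $\cM_\lambda$ being CPTP are fine but not needed beyond what the paper already assumes.
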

\begin{proof}
The proof follows by linearity of $\Phi_P$. In fact, since $\Phi_P$ is a linear map correcting $\cN$, for every $\rho_{L}\in\cD(\cH_{L})$ we have: 
\begin{align*}
\hspace*{1cm}\Phi_P\circ\cM_{\lambda}\circ\Phi_L(\rho_{L})&=\Phi_P\circ(\lambda\cN+(1-\lambda)\cI)\circ\Phi_L(\rho_{L}) \\ &= \lambda\Phi_P\circ\cN\circ\Phi_L(\rho_{L})+(1-\lambda)\Phi_P\circ\Phi_L(\rho_{L}) =\Phi_P\circ\Phi_L(\rho_{L}). \hspace*{18mm}\Box
\end{align*}
\end{proof}

\vspace*{-6mm}

\subsubsection{Compatible subsystems}

Next, we wish to investigate in more detail what faulty initializations can be tolerated by some $\Phi_P$. Since all concatenation of maps must be trace-norm isometries on their inputs in order to faithfully preserve the information contained in $\rho$, the image of $\Phi_L$ and ${\cal N}\circ\Phi_L$ must have the structure of a general subsystem code. Let 
\begin{align*}
\rho_{P'}\simeq\rho_{L}\otimes \tau_{F'}\oplus 0_{R'},\quad \rho_{P''}\simeq\rho_{L}\otimes \tau_{F''}\oplus 0_{R''},
\end{align*} where $\tau_{F'}$ and $\tau_{F''}$ are of full rank. These denote the states associated to the following subsystem decompositions of $\cH_{P}$, respectively: 
\begin{align}
\label{eq:sscomp}
 \cH_{P}\simeq\cH_{L'}\otimes \cH_{F'}\oplus \cH_{R'},\quad \cH_{P}\simeq\cH_{L''}\otimes \cH_{F''}\oplus \cH_{R''}, 
\end{align}
with $\dim(\cH_{L'})=\dim(\cH_{L''})=\dim(\cH_{L})$. In order for a faulty initialization to be tolerated by $\Phi_P$, it needs to be in some sense ``compatible'' with the properly initialized information. That is, if we project the faulty initialized states back on the support of the nominal ones, they should exhibit a tensor structure that is of the same form. This notion is made precise in the following definition.
\begin{defin}[Compatible initializations]
\label{def:comp}
Consider two subsystem decomposition as in Eq. \eqref{eq:sscomp}. 
We say that the initializations in $\cH_{L'}$ with co-factor state $\tau_{F'}$, and in $\cH_{L''}$ with co-factor state $\tau_{F''}$, are {\em compatible} if
\begin{align}
\label{eq:compsub}
	\begin{cases}
	\Pi_{L'F'}\rho_{P''}\Pi_{L'F'}\simeq\rho_{L}\otimes\tilde\tau_{F'}\oplus 0_{R'},\\
	\Pi_{L''F''}\rho_{P'}\Pi_{L''F''}\simeq\rho_{L}\otimes\tilde\tau_{F''}\oplus 0_{R''},
	\end{cases}\quad \forall \rho_{L}\in\cD(\cH_{L}),
\end{align}
where $\Pi_{L'F'}$, $\Pi_{L''F''}$ are the orthogonal projections onto the subspaces $\cH_{L'}\otimes \cH_{F'}$, $\cH_{L''}\otimes \cH_{F''}$, respectively, and $\tilde{\tau}_{F'},\, \tilde{\tau}_{F''}\geq 0$.
\end{defin}
\begin{remark} {\em In principle, one may allow for the projected states, that is, the right-hand sides of Eq. \eqref{eq:compsub}, to have a co-factor $\tilde\tau_{F',F''}$ that depends on the encoded state $\rho_L$. In Appendix \ref{app:cofactor} we explicitly show that this seemingly looser requirements is actually equivalent to the definition of compatible initializations given above.}
\end{remark}

Let us now consider the decomposition $\cH_{P}\simeq\cH_{L''}\otimes \cH_{F'',\lambda}\oplus \cH_{R'',\lambda}$  induced by  the isometric embedding $\cM_{\lambda}\circ \Phi_L,$ with $\cM_{\lambda}\equiv \lambda\cN+(1-\lambda)\cI$, $\lambda\in[0,1]$. We also denote by $\rho_{P'',\lambda}\equiv \cM_{\lambda}(\rho_{P'})=\rho_{L}\otimes \tau_{F'',\lambda}\oplus 0_{R'',\lambda}$ the initialized state in $\cH_{L''}$ with co-factor state $\tau_{F'',\lambda}\in \cD(\cH_{F'',\lambda})$ of full rank. With these definitions in place, we are now in a position to prove the main theorem of this section, which clarifies the role of compatible initializations: 

\begin{thm}[Noise-tolerant encoders]
\label{thm:noise}
There exists an encoding map $\Phi_P$ that tolerates the noise action $\cN$ if and only if the two initializations $\rho_{P'}$ and $\rho_{P'',\lambda}$ in the subsystem decompositions $\cH_{L'}\otimes \cH_{F'}\oplus \cH_{R'}$ and $\cH_{L''}\otimes \cH_{F'',\lambda}\oplus \cH_{R'',\lambda}$, respectively, are compatible for all $\lambda\in[0,1]$.
\end{thm}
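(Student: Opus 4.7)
The plan is to prove the two implications separately.

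\textbf{Necessity.} Suppose $\Phi_P$ tolerates $\cN$. Lemma \ref{lem:convex} immediately upgrades this to tolerance of $\cM_\lambda$ for every $\lambda \in [0,1]$, so $\Phi_P \circ \cM_\lambda \circ \Phi_L = \Phi$. Because $\Phi$ is $1$-isometric and $\Phi_P$ is CPTP (hence trace-norm contractive), the composition $\cM_\lambda \circ \Phi_L$ must itself be a $1$-isometric CPTP embedding. By the isometric characterization of codes from \cite{Ticozzi2010}, it then induces the subsystem decomposition $\cH_P\simeq\cH_{L''}\otimes\cH_{F'',\lambda}\oplus\cH_{R'',\lambda}$ stated in the theorem, with $\rho_{P'',\lambda}\simeq\rho_L\otimes\tau_{F'',\lambda}\oplus 0_{R'',\lambda}$ for a fixed full-rank $\tau_{F'',\lambda}$.

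To deduce the compatibility conditions \eqref{eq:compsub}, I would block-decompose $\rho_{P'}$ with respect to the projection $\Pi_{L''F'',\lambda}$ (and, symmetrically, $\rho_{P'',\lambda}$ with respect to $\Pi_{L'F'}$) and exploit that $\Phi_P$ maps both $\rho_{P'}$ and $\rho_{P'',\lambda}$ to the same encoded state, so $\Phi_P(\rho_{P'}-\rho_{P'',\lambda})=0$. The key observation is that $\Phi_P$ acts isometrically on the image of $\cM_\lambda \circ \Phi_L$, so varying $\rho_L$ over a full-dimensional affine copy of $\cD(\cH_L)$ forces the diagonal block $\Pi_{L''F'',\lambda}\rho_{P'}\Pi_{L''F'',\lambda}$ to carry the $\rho_L$-dependence in a factored form; otherwise, $\Phi_P$ could not simultaneously reproduce the identity action $\rho_L\mapsto\rho_L$ on both decompositions while remaining linear and positive. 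Combining this with the support constraint yields the first compatibility condition; the symmetric argument, swapping the roles of the two decompositions, yields the second.

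\textbf{Sufficiency.} Assume compatibility holds for all $\lambda\in[0,1]$. My plan is to construct $\Phi_P$ explicitly as the composition of three CPTP ingredients: (i) a CPTP measurement channel associated with $\Pi_{L'F'}$, of the form $\sigma\mapsto \Pi_{L'F'}\sigma\Pi_{L'F'} + \Pi_{R'}\sigma\Pi_{R'}$, suitably completed to be trace-preserving; (ii) a partial trace over the gauge factor $\cH_{F'}$, which by compatibility correctly extracts $\rho_L$ both from $\rho_{P'}$ and from the projected block $\Pi_{L'F'}\rho_{P'',\lambda}\Pi_{L'F'}$; and (iii) re-embedding the extracted $\rho_L$ as $\rho_L\otimes\tau_F\oplus 0_R$ in the code subsystem $\cH_S\otimes\cH_F$. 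A direct verification using \eqref{eq:compsub} then confirms that $\Phi_P(\rho_{P'})=\Phi_P(\rho_{P'',\lambda})=\overline\rho$; the only subtlety is to ensure that the $\Pi_{R'}\sigma\Pi_{R'}$ branch contributes only to a gauge/reset component of the output that does not spoil the logical information, which follows from the codeword structure on the image side.

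The main obstacle lies in the necessity direction, specifically in converting the linear identity $\Phi_P(\rho_{P'})=\Phi_P(\rho_{P'',\lambda})$ into the rigid algebraic tensor structure of compatibility. This step likely requires the full strength of the $1$-isometric property of $\Phi_P\circ\cM_\lambda\circ\Phi_L$ for the entire convex family (not just $\lambda=0,1$), together with the positivity of all involved states — any deviation of $\Pi_{L''F'',\lambda}\rho_{P'}\Pi_{L''F'',\lambda}$ from the factored form $\rho_L\otimes\tilde\tau_{F''}\oplus 0_{R''}$ would, after suitable variation of $\rho_L$, produce a contradiction with the faithful extraction of logical information by $\Phi_P$.
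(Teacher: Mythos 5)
Your overall skeleton (Lemma \ref{lem:convex} to upgrade to the whole family $\cM_\lambda$, then a separate argument per direction) matches the paper's, but both directions as written have genuine gaps. In the necessity direction, the step from ``$\Phi_P(\rho_{P'}-\rho_{P'',\lambda})=0$ plus linearity and positivity'' to the factored form \eqref{eq:compsub} is asserted rather than proven; you yourself flag it as the main obstacle, and it is. The paper's mechanism is concrete and you would need some version of it: pick an orthonormal basis $\{\ket{\psi_i}\}$ of $\cH_L$, show $\tr\bigl(\rho_{P',i}\,\cM_\lambda(\rho_{P',j})\bigr)=0$ for $i\neq j$ by contradiction --- using that $\tau_{F'}$ is full rank to get an operator inequality $\rho_{P',i}\geq c\,\rho^{(\gamma)}_{P',j}$, that positive maps preserve the operator order, and that a trace-norm isometry must preserve orthogonality --- and then invoke a separate lemma (Lemma \ref{lem:basis} in the paper) to show the resulting co-factors $\tilde\tau_{F'}(\rho_L)$ are actually independent of $\rho_L$. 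Your appeal to ``varying $\rho_L$ over a full-dimensional affine copy'' gestures at this last step but supplies no argument; in particular nothing in your sketch rules out a $\rho_L$-dependent co-factor, which is precisely what the auxiliary lemma exists to exclude.

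The sufficiency direction has a sharper problem: your channel projects onto $\Pi_{L'F'}$, the support of the \emph{nominal} initialization. Compatibility only guarantees $\Pi_{L'F'}\rho_{P''}\Pi_{L'F'}\simeq\rho_L\otimes\tilde\tau_{F'}\oplus 0_{R'}$ with $\tilde\tau_{F'}\geq 0$ of possibly non-unit trace, so for a faulty input whose support leaks outside $\cH_{L'}\otimes\cH_{F'}$ your construction outputs $\tr(\tilde\tau_{F'})\,\overline\rho$ plus a weight-$(1-\tr\tilde\tau_{F'})$ contribution from the $\Pi_{R'}$ branch, and that branch is constrained by nothing --- the ``codeword structure on the image side'' does not make it collapse onto the correct $\rho_L$-dependent target. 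This is exactly the subtlety you wave away, and it is why the paper instead fixes $\hat\lambda\in(0,1)$ and builds $\Phi_{\hat\lambda}=\cI_{L''}\otimes\cF_{P'',\hat\lambda}\oplus\cI_{R'',\hat\lambda}$ on the decomposition induced by $\cM_{\hat\lambda}\circ\Phi_L$: since $\rho_{P'',\hat\lambda}\geq(1-\hat\lambda)\rho_{P'}$, one has $\Pi_{L'F'}\leq\Pi_{L''F'',\hat\lambda}$, so \emph{no} probability weight is lost on the nominal input, compatibility handles it exactly, and the fully faulty input $\rho_{P''}=\frac{1}{\hat\lambda}\rho_{P'',\hat\lambda}-\frac{1-\hat\lambda}{\hat\lambda}\rho_{P'}$ is then dispatched by linearity. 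To repair your proof you should adopt that choice of projection (or otherwise prove the $\Pi_{R'}$ branch carries zero weight, which is false in general).
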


The proof is given in Appendix \ref{app:main}. In the light of this characterization, we can show that, in general, it is not possible to define of a {\em unique}, ``maximal'' basin of attraction. First, notice that if an initialization is compatible with the nominal one, we can define a subsystem decomposition that comprises both. Intuitively, either one is already a particular case of the other, with $\tau_F$ having support only on a proper subspace of $\Hi_F$, or it is sufficient to augment the dimensionality of the tensor factor $\Hi_F$ by identifying isomorphic copies of $\Hi_L$ in $\Hi_R.$ 

Furthermore, it is easy to see that compatibility is not a transitive property for subsystem decompositions.  In fact, it is possible to construct a counter-example where two faulty initializations are both compatible with the nominal one on its own support, but have mutually incompatible structure on the orthogonal complement. Hence it is not enough to consider all the noise actions compatible with the correct initialization, and construct an overarching, maximal subsystem structure. Nonetheless, it is possible to identify the {\em maximal size} of the gauge subsystem of a tolerable initialization, which corresponds to the integer part of $\dim(\Hi_R)/\dim(\Hi_L).$

If $\dim(\Hi_R)/\dim(\Hi_L)$ is indeed integer, constructing a maximal-dimension gauge subsystem leads to a pure tensor-factor decomposition of $\cH_P$, that is, one with no summand $\cH_R$: 
\[\Hi_P\simeq\Hi_L\otimes\Hi_{F,\max}.\]
In such a case, if the dissipative encoder $\Phi_P$ tolerates faulty encodings in $\Hi_P\simeq\Hi_L\otimes\Hi_{F,\max},$ we find the  basin of attraction for each state to be simply ${\cal B}_{\overline\rho}=\{\rho\otimes\tau_F\ |\ \tau_F\in{\cal D}(\Hi_{F,\max})\}.$

It is worth remarking that these results are existential in nature, aimed to characterize what type of robustness may be attained, in principle, by using dissipative quantum encoders. What can be done in specific scenarios, including under further design constraints, may significantly vary. For example, in Sec. \ref{sec:FTE}, we will develop dissipative encoders that additionally guarantee {\em invariance} of the target code -- at the cost of reducing the basin of attraction -- and that rely on the important class of {\em Markovian} dissipation, as we formally introduce next. 

\subsection{Encoding via Markovian dynamics}
\label{sub:Markov}

\subsubsection{Continuous-time encoders}
\label{sub:CDE}

Continuous-time Markovian dynamics are widely employed to model a variety of both naturally occurring and controlled irreversible behavior, in contexts ranging from quantum statistical mechanics and thermodynamics to continuous quantum measurement and quantum reservoir engineering \cite{AlickiLendi,Kosloff,altafini-introduction}. Their convergence to equilibria is provably always asymptotic \cite{johnson-FTS}. In particular, a number of approaches have been devised for analyzing and constructing continuous-time quantum dynamical semigroups (QDSs) able to ensure stabilization of desired states, subspaces, and subsystems, see e.g.~\cite{ticozzi-QDS,ticozzi-markovian} and references therein. 

The task of designing a physical encoder entails a related, yet more articulated, set of requirements. With reference to Eq. \eqref{enc3}, we say that a generator of a QDS (or ``Liouvillian''), $\Li:\mathcal{B}(\Hi_{P})\rightarrow\mathcal{B}(\Hi_P)$, defines {\em continuous-time dissipative encoding (CDE)} for a subsystem code $\cC$ if for all initialized states $\Phi_L(\rho)=\rho_{P'}\simeq \rho\otimes \tau_{F'} \oplus 0_{R'},$ where $\rho\in\mathcal{D}(\Hi_L)$ and $\tau_{F'}\in \mathcal{D}(\Hi_{F'})$, the evolution converges asymptotically to the intended state in $\cC$, that is, 
\beq
\label{eq:CDD} 
\lim_{t\rightarrow +\infty}e^{\Li t}[\rho_{P'}]=\overline\rho\simeq\rho\otimes \tau_F \oplus 0_R\in\cC.
\eeq
In addition, for multipartite systems one may require the encoding generator to respect some locality constraints. We say that {\em a CDE is quasi-local (QL)} with respect to a specified neighborhood structure $\Ni\equiv \{\Ni_k\}$ if $\Li=\sum_k \Li_{k},$ with $\Li_k$ a generator acting nontrivially only on one neighborhood, that is, a {\em neighborhood map} \cite{johnson-FTS}.

\begin{remark} 
{\em Beside ensuring encoding, Eq. \eqref{eq:CDD} automatically implies that each state encoded in $\cC$ is an invariant (fixed) state for $\Phi_P$, $\Li(\overline\rho)=0.$ Note that the map $\Phi_P$ is formally well-defined {\em only on initialized states}, as the limit of $e^{\Li t}$ exists for initialized input states and their attraction basins. Since all initial states of a QDS converge towards its center manifold (that comprises eigenoperators relative to purely imaginary eigenvalues), non-initialized states could also converge to rotating states, preventing the CPTP map $\Phi_P$ from being well-defined. By requiring the limit to exist, we must have, in particular, eigenoperators corresponding to eigenvalue zero (i.e., non-oscillating) and hence the limit, for each initial condition, is a fixed operator. Also notice that while no explicit robustness against initialization errors is imposed, a CDE can, at least in principle, reabsorb errors {\em asymptotically}. Assume that two different states $\rho_1,\rho_2$ are to be correctly encoded in the same codeword $\rho\in\cC$. Then their difference must converge to zero, which can happen only asymptotically for continuous-time dynamics \cite{johnson-FTS}. Since our discussion will focus on the ability of dissipative encoding to reabsorb errors, this justifies the asymptotic limit in our definition.}
\end{remark}

\subsubsection{Discrete-time encoders}

In scenarios where the physical encoder is implemented via a discrete sequence of operations (unitary and dissipative gates and measurements), a different definition is more appropriate. The discrete-time framework allows for finite-time convergence, and includes more naturally the typical unitary protocols for encoding as a limiting case. 
 
A sequence of CPTP maps  $\{\Ei_k\}$ on $\mathcal{B}(\Hi_{P})$, defines {\em discrete-time dissipative encoding (DDE)} for a subsystem code $\cC$ 
if for all initialized states $\Phi_L(\rho)=\rho_{P'}\simeq\rho\otimes \tau_{F'} \oplus 0_{R'},$ where $\rho\in\mathcal{D}(\Hi_L)$ and $\tau_{F'}\in \mathcal{D}(\Hi_F')$, the evolution converges asymptotically to the intended state in $\cC$, that is, 
\beq 
\lim_{k\rightarrow +\infty}\Ei_k\circ\Ei_{k-1}\circ\ldots\circ\Ei_1[\rho']=\overline\rho\simeq\rho\otimes \tau_F \oplus 0_R\in\cC.
\label{dde}
\eeq

\noindent 
The limit of the concatenated sequence exists by definition, at least for initialized input states, and its extension to a CPTP map $\Phi_P$ is the {discrete-time dissipative physical encoder.} 

In contrast with continuous time, in a discrete-time scenario perfect encoding can be achieved in finite time in principle:
A finite sequence of CPTP maps,  $\{\Ei_k\}_{k=1}^{M}$ on $\mathcal{B}(\Hi_{P})$, defines a {\em finite-time dissipative encoder (FTDE)} for a subsystem code $\cC,$ 
if for all initialized states $\Phi_L(\rho)=\rho_{P'}\simeq \rho\otimes \tau_{F'} \oplus 0_{R'},$ where $\rho\in\mathcal{D}(\Hi_L)$ and $\tau_{F'}\in \mathcal{D}(\Hi_{F'})$, the evolution converges to the intended state in the code in a finite number of steps, that is, 
\begin{equation}
\Ei_M\circ\Ei_{M-1}\circ\ldots\circ\Ei_1[\rho']=\overline\rho\simeq\rho\otimes \tau_F \oplus 0_R\in\cC.
\label{ftde}
\end{equation}
We say that {\em a discrete-time or FT DE is QL} with respect to a neighborhood structure $\Ni$ if each $\Ei_{k}$ is a neighborhood map.

An FTDE allows for exact encoding in finite time, as typical unitary encoders do, while retaining the ability of absorb initialization errors. While in the discrete-time scenario the invariance of the code states is not strictly required to reabsorb errors, imposing invariance allows us to better compare with CDE, and makes the encoding task compatible with QEC protocols, as we shall see in the next section. We will say that a discrete-time DE or FT DE is {\em code preserving} if each state $\overline\rho\in\cC$ satisfies $\Ei_k(\overline\rho)= \overline\rho$, for each of the maps $\Ei_k$ in Eq. \eqref{dde} or Eq. \eqref{ftde}, respectively. 
\section{Finite-time dissipative encoders for stabilizer quantum codes}
\label{sec:FTE}

\subsection{A finite-time encoder for the repetition code}

In order to gain intuition into the general case, we first reconsider the 3-qubit repetition code, within the stabilizer formalism. As mentioned in Remark \ref{rem1}, $\cH_\cC=\textup{span}\{\ket{000},\ket{111}\}=\cH_\cS$ can be associated to stabilizer generators $\{S_1=ZZI, S_2=IZZ\}\in \cS$. As before, we consider the first physical qubit to be the upload qubit, which we assume to be initialized in the logical state $\rho_L$ to be encoded, and we choose the logical operators to be $\overline{X}=XXX$ and $\overline{Z}=ZII$. We now show that $\cC$ may be encoded in finite time from a localized upload qubit using a sequence of two-body CPTP maps, that is, \(\Phi_P=\Phi_{23}\circ\Phi_{12},\) where indexes are understood to label physical qubits~\footnote{One may also construct $\Phi_P$ through maps that act on qubits in the opposite order or, as we will see in the general case, independently from the order altogether.}.

Our strategy is to choose encoding maps that resemble the error-correcting operations of a stabilizer code. As a first step, we propose a structure for the CPTP maps that guarantees that the image of $\Phi_P =\Phi_{23}\circ\Phi_{12}$ corresponds to $\cC$. Each of the two-body maps performs a measurement of the stabilizer generator $S_k$, associated to projectors $ \frac{1}{2}(\identity\pm S_{k}),$ followed by a unitary (Pauli) correction operation $C_k$ in case the outcome corresponding to $\frac{1}{2}(\identity - S_{k})$ is observed. As in stabilizer QEC, we choose $C_k$ so that it anticommutes with $S_k$. Hence, $\frac{1}{2}C_k(\identity - S_{k})=\frac{1}{2}(\identity + S_{k})C_k.$ The Kraus operators of the composed map $\Phi_P$ are then
$$K^{(i_1, i_2)}=\frac{1}{2}(\identity+S_2)C_2^{i_2}\frac{1}{2}(\identity+S_1)C_1^{i_1},\qquad i_1,i_2 \in\{0,1\}.$$ 
In order for the range of these composed maps to be in $\cC$, it is {\em necessary and sufficient} that $[C_2, S_1]=0$. For necessity, consider that $[C_2, S_1]\not =0$. Then, it must be that $\{C_2,S_1\}=0$, since elements in $\cP_n$ either commute or anti-commute. The Kraus operator $K^{(i_1,1)}$ would then be of the form $\frac{1}{2}(\identity+S_2)\frac{1}{2}(\identity-S_1)C_2C_1^{i_1}$, having a range which is orthogonal to $\cH_\cC$. That commutativity is also sufficient follows from the fact that the range of the composed map's Kraus operators is the range of $\frac{1}{4}(\identity+S_2)(\identity+S_1)$, which is equal to the code support (see Proposition \ref{prop:stab} below). 
 
We can narrow down our search further by requiring that the logical operators of the code be left invariant, making the encoder code-preserving. This is equivalent to having the correction operators commute with the logical operators, leaving, in our specific case, a choice of {$C_{1}\in\{IXI, ZYI\}$ and $C_{2}\in \{IIX,IZY\}$.}

Furthermore, for a general correction map, correction operators are only ever defined up to multiplication by the corresponding stabilizer. For instance, the Kraus operators of the first map, including the correction, can be written equivalently with respect to either choices of the correction operator $C_1$, since
$$ \frac{1}{2}(III + ZZI)ZYI=\frac{1}{2}(ZZI + III)(ZZI)ZYI=\frac{1}{2}(III + ZZI)IXI.$$
Therefore, our requirements have effectively singled out one possibility for the encoding maps. The key properties of these maps is that the range of their composition is in $\cC$ and the logical operator values are left invariant. 

Having specified the form of $\Phi_P$, we determine the basin of attraction ${\cal B}_{\overline\rho}$ which ensures that $\Phi_P(\rho_L\otimes\sigma)=\overline{\rho}.$ In this case, since the summand $\cH_R$ is empty, it is sufficient to determine the set of co-factor states $\sigma$ that guarantee the correct encoding above. For this to be achieved, the expectation of the logical operators computed with the output density matrix must coincide with those of the upload qubit, that is, 
\begin{equation}
\label{eq:encodingrelation}
\tr{}(\overline{X}^i\overline{Z}^j\Phi_P(\rho_L\otimes\sigma))=\tr({X^iZ^j\rho_L}), \qquad \forall i,j\in\{0,1\}.
\end{equation}
Since, by construction, $\Phi_P$ leaves the values of the logical operators invariant, the above equation simplifies to 
\( \tr{}{(X^iZ^j\rho_L)}=\tr{}{(\overline{X}^i\overline{Z}^j(\rho_L\otimes\sigma))}. \)
Evaluating this for the repetition code, we obtain
\begin{equation*}
\tr{}{(X^iZ^j\rho_L})=\tr{}{((X^iZ^j\rho_L)\otimes[(XX)^i\sigma])},
\end{equation*}
giving $\tr{}{((XX)^i\sigma)}=1$. Accordingly, the basin of attraction for $\overline\rho$ is the set of density operators $\rho_L\otimes \sigma$ with support of $\sigma$ contained in the $+1$-eigenspace of $XX$.

\subsection{General structure of finite-time encoders}

Building on the previous example, we now construct a FTDE $\Phi_P$ for a given stabilizer (subspace) code, with stabilizer generators $\{S_{k}\}_{k=1}^{r}\subseteq \cP_{n}$, by considering a composition of $r$ encoding CPTP maps 
of the form
\[ \Phi_{k}(\rho)\equiv A_{+,k}\rho A_{+,k}^{\dagger}+A_{-,k}\rho A_{-,k}^{\dagger},\quad k=1,\dots,r, \]
where
\[ A_{+,k}\equiv \frac{1}{2}(I+S_{k}), \ A_{-,k}\equiv \frac{1}{2}C_{k}(I-S_{k}). \]
Here, $\{C_{k}\}_{k=1}^{r}$ are correction-like operators, that we require to satisfy a number of constraints.
 
\medskip

\noindent 
{\bf E1. The code space must be correctly prepared.} $\{C_{k}\}_{k=1}^{r}$ are Pauli operators such that 
\beq
\label{eq:anticommuting}
\{C_{k},S_{k}\}=0, \quad \forall k, 
\eeq 
and
\beq
\label{eq:commuting}
[C_{k},S_{j}]=0, \quad \forall j\ne k.
\eeq
This implies that $A_{-,k}$ can be rewritten as
\beq
\label{eq:altform}
A_{-,k}=\frac{1}{2}(I+S_{k})C_{k}.
\eeq
The latter form is useful in proving that $\Phi_P$ prepares the code subspace in the Schr\"odinger's picture, that is, in establishing the following:
\begin{prop}
\label{prop:stab}
If the conditions in Eqs. \eqref{eq:anticommuting}-\eqref{eq:commuting} are obeyed, then {\em any} concatenation $\Phi_P$ of the $r$ maps $\Phi_1,\ldots,\Phi_r$ prepares the code subspace, \(\Phi_P(\rho)\in{\cal C}\), and each encoded state $\rho\in{\cal C}$ is invariant.
\end{prop}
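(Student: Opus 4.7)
The plan is to reduce the claim to two facts about the individual maps: (i) each $\Phi_k$ sends \emph{every} input into the $+1$-eigenspace of $S_k$; and (ii) each $\Phi_k$ \emph{preserves} the $+1$-eigenspace of $S_j$ for every $j\neq k$. Together these imply that after applying all $r$ maps in any order the state is supported on the common $+1$-eigenspace of $\{S_k\}_{k=1}^{r}$, which is exactly $\cH_{\cC}$. The invariance of code states then follows from a direct one-line computation.

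For (i), I would use the alternative form \eqref{eq:altform}: both Kraus operators factor as $\Pi_k(\cdot)$ on the left, with $\Pi_k\equiv\frac{1}{2}(I+S_k)$ the projector onto the $+1$-eigenspace of $S_k$, since $A_{+,k}=\Pi_k$ and $A_{-,k}=\Pi_k C_k$. Hence $\Phi_k(\rho)=\Pi_k\Phi_k(\rho)\Pi_k$ for every $\rho$. For (ii), let $\Pi_j=\frac{1}{2}(I+S_j)$ and assume $\Pi_j\rho\Pi_j=\rho$. I would verify that $\Pi_j$ commutes with both Kraus operators of $\Phi_k$ when $j\neq k$: since $\cS$ is abelian we have $[S_k,S_j]=0$, hence $[\Pi_k,\Pi_j]=0$, giving $[\Pi_j,A_{+,k}]=0$; and by \eqref{eq:commuting} together with $[S_k,S_j]=0$, we get $[\Pi_j,C_k]=0$ and $[\Pi_j,(I-S_k)]=0$, hence $[\Pi_j,A_{-,k}]=0$. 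Consequently $\Pi_j\Phi_k(\rho)\Pi_j=\Phi_k(\Pi_j\rho\Pi_j)=\Phi_k(\rho)$, so the $+1$-eigenspace of $S_j$ is preserved.

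Combining (i) and (ii) by induction on the length of the composition: after applying the first map $\Phi_{k_1}$, the output is stabilized by $S_{k_1}$; after applying $\Phi_{k_2}$, it is stabilized by $S_{k_2}$ thanks to (i) and remains stabilized by $S_{k_1}$ thanks to (ii); continuing, after all $r$ maps it is stabilized by every $S_k$, hence supported on $\cH_{\cC}$. Crucially this argument is insensitive to the order in which the maps are composed, which gives the ``any concatenation'' part of the statement.

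For the code-preserving part, if $\rho\in\cC$ then $S_k\rho=\rho S_k=\rho$ for every $k$, so $A_{+,k}\rho A_{+,k}^\dagger=\frac{1}{4}(I+S_k)\rho(I+S_k)=\rho$, while $(I-S_k)\rho=0$ immediately gives $A_{-,k}\rho A_{-,k}^\dagger=0$, whence $\Phi_k(\rho)=\rho$ and therefore $\Phi_P(\rho)=\rho$ for any concatenation. I don't anticipate a real obstacle here: the only subtlety is the careful bookkeeping that $\Pi_j$ commutes with $A_{-,k}$ when $j\neq k$, which is where both hypotheses \eqref{eq:anticommuting} (through the rewriting \eqref{eq:altform}) and \eqref{eq:commuting} are used in tandem.
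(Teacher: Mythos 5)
Your proof is correct and follows essentially the same route as the paper: both arguments rest on rewriting $A_{-,k}=\frac{1}{2}(I+S_k)C_k$ via Eq.~\eqref{eq:anticommuting} and using Eq.~\eqref{eq:commuting} to commute the correction operators past the other stabilizer projectors, and both establish invariance by noting that code states are fixed by $A_{+,k}$ and annihilated by $A_{-,k}$. The only cosmetic difference is that you track the preserved $+1$-eigenspaces inductively map by map, whereas the paper pushes all correction operators to the right at once to write every Kraus operator of the composition as $\Pi_{\cC}\bar{C}$; the underlying mechanism is identical.
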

\begin{proof} If an operator has support on $\cH_\cC$, then it is in the $+1$-eigenspace of all the $A_{+,k}$ operators and in the kernel of all the $A_{-,k},$ and hence it is preserved by each $\Phi_k.$  Thus, $\cH_\cC$ is invariant and, in particular, all the encoded states are fixed states for $\Phi_k$'s and thus $\Phi_P$. Using Eq. \eqref{eq:altform} to represent all the $A_{-,k}$ operators of the $\Phi_k$ and Eq. \eqref{eq:commuting} to ``push'' all correction operators before (to the right of) the projections,  {\em all} Kraus operators of the concatenated maps $\Phi_P,$ independently of the order of the $\Phi_k$, can be written in the form $\Pi_\cC \bar C$, where \[\Pi_\cC\equiv \prod_{\ell=1,\ldots,r}(I+S_k)/2,\] is the projection on the stabilizer subspace $\cH_\cS=\cH_\cC$, and $\bar C$ an ordered product of a subset of correction operators of the selected $A_{-,k}$. Hence, the output of the concatenated map $\Phi_P$ has support contained in the support of $\cC$. Since all the maps are TP, this implies that the $\Phi_P$ stabilizes $\cH_\cC$ in finite time \cite{ticozzi-discretefeedback,johnson-FTS,ticozzi-alternating}.~\qed\end{proof}

It is worth noticing that: (i) Condition \eqref{eq:commuting} is also necessary for invariance given the structure of the maps we chose, as in the repetition-code example;  (ii) if an ordering for the stabilizer operators is fixed, we can replace Eq. \eqref{eq:commuting} with a weaker requirement, namely, $[C_{k},S_{j}]=0$, $\forall j< k$. The stronger condition \eqref{eq:commuting} will imply that the encoding maps can be applied in any order (namely, they guarantee {\em robust FTDE with respect to the map ordering}). Remarkably, we will prove that finding such operators is always possible for stabilizer codes.

To address the encoding, however, this is not sufficient. We need to impose that the upload qubits are correctly mapped to the corresponding ones in $\cC$. To this aim, it is convenient to focus on the effect of the encoder on the observables -- that is, to move to the Heisenberg picture. 

\medskip

\noindent 
{\bf E2. Encoded operators must be extensions of the upload ones.} Consider a partition of the physical subsystems in upload qubits and {(with some abuse of terminology)} 
gauge qubits for the encoding maps: $\cH_P\simeq \cH^{\text{upload}}\otimes \cH^{\text{gauge}}.$ The upload subsystems initially carry the the information that will be transferred into $\cC$ by the encoder; the gauge qubits are the rest. Let $\{X_{k},Z_{k}\}_{k=1}^{n-r}$ be a canonical choice of logical operators for the input subsystem. We require that: 
\beq
\label{eq:encode}
\overline{X}_{1}^{p_{1}}\overline{Z}_{1}^{q_{1}}\cdots \overline{X}_{n-r}^{p_{n-r}}\overline{Z}_{n-r}^{q_{n-r}}=X_{1}^{p_{1}}Z_{1}^{q_{1}}\cdots X_{n-r}^{p_{n-r}}Z_{n-r}^{q_{n-r}}\otimes R_{p,q},
\eeq
for all $p=(p_{1},\dots,p_{n-r})$, $q=(q_{1},\dots,q_{n-r})$ such that $p_{i},q_{i}\in\{0,1\}$, with $R_{p,q}$ being a Pauli operator acting non-trivially on the co-factor (gauge) subsystem $\cH^{\text{gauge}}.$

\medskip

\noindent 
{\bf E3. Encoded operators must be invariant.} While we proved that all states on the code support are indeed invariant, the operators we choose to represent encoded information have support everywhere, and their invariance is not directly guaranteed by the form of the maps $\Phi_k.$ As in the repetition code example, invariance is guaranteed if  
\beq
\label{eq:invc}
[C_{k},\overline{X}_{j}]=[C_{k},\overline{Z}_{j}]=0, \quad\forall j\ne k.
\eeq
Condition \eqref{eq:invc}, in the dual (Heisenberg) picture, ensures that the logical operator are fixed points for the 
maps $\Phi_k$ and it further allows us to easily determine the basin of attractions of $\Phi_P$. 
 
\medskip

We can now show that the $\Phi_P$ we constructed is indeed a valid encoder for the target code:

\begin{prop}
\label{prop:DE}
The concatenation $\Phi_P=\Phi_{r}\circ\cdots\circ\Phi_{1},$ for any ordering of the $\Phi_{i}$, is a valid FTDE for the stabilizer code ${\cal C}$ if requirements E1.-E2.-E3. above are satisfied. The common $+1$-eigenspace of the $R_{p,q}$ identifies the basin of attraction of co-factor states $\sigma$ for the associated physical encoder, that is, 
\(\Phi_P(\rho_L\otimes\sigma)=\bar\rho\in\cal C,\)
where $\rho_L$ is the state on the first $n-r$ qubits to be encoded, and ${\rm supp}(\sigma)$ is contained in the common +1-eigenspace of $R_{p,q}$.
\end{prop}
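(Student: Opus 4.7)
The plan is to compute the expectation values of the logical Pauli monomials on the output of $\Phi_P$ by moving to the Heisenberg picture, invoking each of E1, E2, E3 in turn. First, Proposition~\ref{prop:stab} (which relies only on E1) guarantees that $\Phi_P(\rho_L\otimes\sigma)\in\mathcal{C}$ regardless of the ordering of the $\Phi_k$'s, and that every codeword is a fixed point. Since a state in $\mathcal{C}$ is uniquely determined by its expectation values on the logical monomials $\overline{X}^p\overline{Z}^q \equiv \overline{X}_1^{p_1}\overline{Z}_1^{q_1}\cdots\overline{X}_{n-r}^{p_{n-r}}\overline{Z}_{n-r}^{q_{n-r}}$ (the stabilizer expectations being automatically $+1$), it suffices to match these values to those of the target $\overline\rho$ precisely when $\sigma$ lies in the claimed eigenspace.

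The key observation is that the adjoint map $\Phi_k^\ast$ fixes every such monomial $O\equiv \overline{X}^p\overline{Z}^q$. Indeed, $O\in\mathcal{N}(\mathcal{S})$ commutes with $S_k$, and by E3 it commutes with $C_k$. Using the alternative form $A_{-,k}=\tfrac{1}{2}(I+S_k)C_k$ from Eq.~\eqref{eq:altform}, together with $S_k^2=C_k^2=I$, a direct computation yields
\begin{equation*}
\Phi_k^\ast(O) \,=\, \tfrac{1}{4}(I+S_k)\,O\,(I+S_k) + \tfrac{1}{4}(I-S_k)\,C_k O C_k\,(I-S_k) \,=\, O\,\tfrac{1}{2}\bigl[(I+S_k)+(I-S_k)\bigr] \,=\, O,
\end{equation*}
where I have used $[O,S_k]=0$ to commute $O$ past the stabilizer projectors, $C_k O C_k = O$, and $(I\pm S_k)^2 = 2(I\pm S_k)$. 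Hence $\Phi_P^\ast(\overline{X}^p\overline{Z}^q)=\overline{X}^p\overline{Z}^q$ for \emph{any} ordering of the composition, which already accounts for the order-independence part of the claim.

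Combining this with E2, one computes
\begin{equation*}
\mathrm{tr}\!\bigl[\overline{X}^p\overline{Z}^q\,\Phi_P(\rho_L\otimes\sigma)\bigr] = \mathrm{tr}\!\bigl[\Phi_P^\ast(\overline{X}^p\overline{Z}^q)(\rho_L\otimes\sigma)\bigr] = \mathrm{tr}\!\bigl[(X^pZ^q\otimes R_{p,q})(\rho_L\otimes\sigma)\bigr] = \mathrm{tr}(X^pZ^q\rho_L)\,\mathrm{tr}(R_{p,q}\sigma).
\end{equation*}
The target encoded state $\overline\rho\in\mathcal{C}$ is uniquely characterized by $\mathrm{tr}(\overline{X}^p\overline{Z}^q\overline\rho)=\mathrm{tr}(X^pZ^q\rho_L)$ for all $p,q$, and $\Phi_P(\rho_L\otimes\sigma)\in\mathcal{C}$ by Proposition~\ref{prop:stab}, so $\Phi_P(\rho_L\otimes\sigma)=\overline\rho$ iff $\mathrm{tr}(R_{p,q}\sigma)=1$ for every $(p,q)$. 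Because each $R_{p,q}$ is a Pauli operator with spectrum $\{\pm 1\}$, this is equivalent to $\sigma$ being supported on the $+1$-eigenspace of $R_{p,q}$, which simultaneously across all $(p,q)$ reproduces the basin of attraction stated in the proposition.

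The main subtlety is establishing the Heisenberg-picture invariance of the logical monomials under each $\Phi_k^\ast$: this hinges on commutativity of $O$ with both $S_k$ (automatic from $O\in\mathcal{N}(\mathcal{S})$) and $C_k$. For the computation to work uniformly in $(p,q)$, condition E3 must be read as $[C_k,\overline{X}_j]=[C_k,\overline{Z}_j]=0$ for \emph{every} $j$, consistent with the choices made in the repetition-code example of the previous subsection. Granting this, the rest is linear-algebraic bookkeeping, and the order independence is automatic since each factor $\Phi_k^\ast$ already fixes $O$ individually.
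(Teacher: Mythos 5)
Your proposal is correct and follows essentially the same route as the paper's proof: working in the Heisenberg picture, showing each logical monomial $\overline{X}^{p}\overline{Z}^{q}$ is fixed by every dual map $\Phi_k^{\dagger}$ (via commutation with $S_k$ and $C_k$), and then using E2 to factor the expectation into $\mathrm{tr}(X^pZ^q\rho_L)\,\mathrm{tr}(R_{p,q}\sigma)$, so that correct encoding is equivalent to $\mathrm{tr}(R_{p,q}\sigma)=1$ for all $p,q$. Your explicit appeal to Proposition \ref{prop:stab} to justify that matching the logical expectations pins down the output state within $\cal C$, and your remark that E3 must be read as commutation with \emph{all} logical operators, are both consistent with (and slightly more explicit than) the paper's own argument.
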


\begin{proof}
Proper encoding (in the Heisenberg picture) is ensured if:
\beqa&&\Tr(\overline{X}_{1}^{p_{1}}\overline{Z}_{1}^{q_{1}}\cdots \overline{X}_{n-r}^{p_{n-r}}\overline{Z}_{n-r}^{q_{n-r}}\Phi_P(\rho_L\otimes \sigma))=\Tr(X_{1}^{p_{1}}Z_{1}^{q_{1}}\cdots X_{n-r}^{p_{n-r}}Z_{n-r}^{q_{n-r}}\rho_L),
\label{eq:want}
\eeqa
for all $p=(p_{1},\dots,p_{n-r})$, $q=(q_{1},\dots,q_{n-r})$ such that $p_{i},q_{i}\in\{0,1\}$. Assuming that the conditions in Eqs. \eqref{eq:commuting}, \eqref{eq:encode}, \eqref{eq:invc} hold, we have, for all $p,q$:
\beqa
\nonumber\Tr(\overline{X}_{1}^{p_{1}}\overline{Z}_{1}^{q_{1}}\cdots \overline{X}_{n-r}^{p_{n-r}}\overline{Z}_{n-r}^{q_{n-r}}\Phi_P(\rho_L\otimes \sigma))
&=&\Tr\big(\Phi_P^\dag(\overline{X}_{1}^{p_{1}}\overline{Z}_{1}^{q_{1}}\!\cdots \overline{X}_{n-r}^{p_{n-r}}\overline{Z}_{n-r}^{q_{n-r}})(\rho_L\otimes \sigma)\big)\\
&\nonumber=&\Tr(\overline{X}_{1}^{p_{1}}\overline{Z}_{1}^{q_{1}}\cdots \overline{X}_{n-r}^{p_{n-r}}\overline{Z}_{n-r}^{q_{n-r}}(\rho_L\otimes \sigma))\\
&=&\Tr(X_{1}^{p_{1}}Z_{1}^{q_{1}}\cdots X_{n-r}^{p_{n-r}}Z_{n-r}^{q_{n-r}}\rho_L)\Tr(R_{p,q}\sigma). 
\label{eq:new}
\eeqa
In the last two equations we have used two facts: (i) the encoded operators are invariant for the (dual) encoder $\Phi_P^\dag$, as they commute with both $S_k$ and $C_k$, by Eq. \eqref{eq:invc}. Therefore, for each $k$, 
\[A^\dag_{\pm,k}(\overline{X}_{1}^{p_{1}}\overline{Z}_{1}^{q_{1}}\cdots \overline{X}_{n-r}^{p_{n-r}}\overline{Z}_{n-r}^{q_{n-r}})A_{\pm,k}=(\overline{X}_{1}^{p_{1}}\overline{Z}_{1}^{q_{1}}\cdots \overline{X}_{n-r}^{p_{n-r}}\overline{Z}_{n-r}^{q_{n-r}})A^\dag_{\pm,k}A_{\pm,k},\] from which invariance for the (unital) concatenation map $\Phi_P^\dag$  follows; and (ii) the explicit form in Eq. \eqref{eq:encode} for the encoded operators. Then \eqref{eq:want} is equal to \eqref{eq:new} if and only if is the co-factor state $\sigma$ is such that $\Tr(R_{p,q}\sigma )=1,$ for all $p,q.$  \qed
\end{proof}

\begin{remark}
{\em A similar construction to the one described above can be used to build a CDE, as defined in Sec. \ref{sub:CDE}. More precisely, the semigroup generator \( \Li(\rho)\equiv \Phi_P(\rho)-\rho, \) where $\Phi_P$ is a DDE, defines a CDE from the first $n-r$ qubits to the subspace code ${\cal C},$ with basin of attraction the common $+1$-eigenspace~of~$R_{p,q}$. The full proof, which employs a different approach leveraging Lyapunov techniques, is given in \cite{encoding-CDC}.}
\end{remark}

\subsection{Main result and implications}

In the following theorem, which is the main result of this section, we both establish that, for stabilizer codes, it is always possible to construct encoding maps that satisfy the structural constraints specified above, and provide an explicit construction achieving that.  As a corollary of the analysis, we further show that the basin of attraction is always non-empty, and in fact corresponds to a stabilizer subspace of the co-factor space.
\begin{thm}[FTDE for stabilizer codes] 
\label{thm:stabilizer-enc}
Given a stabilizer group $\cS$ associated to a (subspace) code $\cC$, there exists a set of generators $\{S_i\}_{i=1}^{r}$,  logical operators $\{\overline{X}_i,\overline{Z}_i\}_{i=1}^{n-r}$ and correction maps $\{C_i\}_{i=1}^{r}$ satisfying the encoding requirements E1.-E2.-E3 for FTDE in $\cC$.
\end{thm}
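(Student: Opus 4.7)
The plan is to leverage the symplectic representation of the Pauli group over $\mathrm{GF}(2)$, combined with the standard form of stabilizer codes, to simultaneously construct generators, logical operators, and correction operators meeting E1, E2, and E3.

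First, I would apply row operations to the stabilizer generators (which change the generating set but leave $\cS$ unchanged) together with qubit permutations to bring the associated check matrix into Gottesman's standard form (see, e.g., \cite{Nielsen-Chuang,QECBook}). In that form, the last $k=n-r$ physical qubits can be designated as the upload subsystem $\cH^{\text{upload}}$, and the standard construction yields explicit logical operators $\overline X_i,\overline Z_i$ whose restrictions to the upload qubits are precisely $X_i$ and $Z_i$ respectively (and $I$ on the other upload qubits), with possibly non-trivial Pauli action on the first $r$ physical qubits. This is exactly the content of requirement E2, with the operators $R_{p,q}$ being the induced products of those Pauli tails on the gauge factor.

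Second, I would construct each correction operator $C_k$ via a symplectic basis completion argument in $\mathbb{F}_2^{2n}$. Identifying Pauli operators with vectors and commutators with the standard symplectic form $\langle\cdot,\cdot\rangle_\omega$ (Appendix \ref{sympl}), the vectors associated to $\{S_j\}_{j=1}^{r}$ and $\{\overline X_i,\overline Z_i\}_{i=1}^{n-r}$ are linearly independent and span the symplectic image of the normalizer $\cN(\cS)$, a subspace of dimension $r+2(n-r)=2n-r$. Conditions E1 and E3 translate, for each fixed $k$, into the affine-linear system
\begin{equation*}
\langle c_k,s_j\rangle_\omega=\delta_{jk}, \qquad \langle c_k,\overline x_i\rangle_\omega=\langle c_k,\overline z_i\rangle_\omega=0,
\end{equation*}
to be satisfied for all $j=1,\dots,r$ and $i=1,\dots,n-r$. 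By nondegeneracy of the symplectic form, this system is solvable for $c_k\in\mathbb{F}_2^{2n}$, with an $r$-dimensional affine family of solutions (the residual freedom amounts to multiplying $C_k$ by elements of $\cS$). Any chosen representative yields a valid Pauli operator $C_k$ satisfying E1 and E3.

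The main obstacle is ensuring that fixing the logical operators in the specific form demanded by E2 does not preclude the existence of correction operators meeting E1 and E3. The symplectic formalism resolves this transparently: once $\{S_j\}$ and $\{\overline X_i,\overline Z_i\}$ are fixed, the defining conditions on each $c_k$ amount to a consistent linear system of $2n-r<2n$ equations in $\mathbb{F}_2^{2n}$, precisely because the stabilizer generators and logical operators occupy mutually orthogonal or conjugate positions within a symplectic basis. With existence secured, the verification of E1--E3 on the resulting triple $\{S_j,\overline X_i,\overline Z_i,C_k\}$ reduces to direct inspection of these symplectic relations, and Proposition \ref{prop:DE} then certifies that $\Phi_P=\Phi_r\circ\cdots\circ\Phi_1$ is a valid FTDE for $\cC$.
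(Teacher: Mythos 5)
Your proposal is correct and follows essentially the same route as the paper: put the check matrix in Gottesman standard form to obtain generators and logical operators satisfying E2, then obtain each $C_k$ by solving the $\mathbb{F}_2$-linear system $\overline{S}\Lambda x_k=e_k$, whose solvability follows from the linear independence of the stacked stabilizer and logical rows together with the nondegeneracy of the symplectic form. Your observation that the residual freedom in each $C_k$ is exactly multiplication by elements of $\cS$ is a correct (and slightly more explicit) account of the solution set than the paper gives.
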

\begin{proof}
The proof is constructive. First, we show how to define a set of logical operators $\{\overline{X}_i,\overline{Z}_i\}_{i=1}^{n-r}$ satisfying condition E3. To this end, we exploit the check matrix representation of the stabilizer generators (see Appendix \ref{sympl}). Specifically, we can express the check matrix of the stabilizer group $\cS$ as
\begin{align}
\label{eq:standard-form}
S = \begin{blockarray}{ccccccc}
 {\scriptscriptstyle n-r} & {\scriptscriptstyle r-k} & {\scriptscriptstyle k} & {\scriptscriptstyle n-r} & {\scriptscriptstyle r-k} & {\scriptscriptstyle k} \\
      \begin{block}{[ccc|ccc]l}
       A_1 & A_2 & I & B & 0 & C & \ {\scriptscriptstyle k} \\ 
           0 & 0 & 0 & D & I & E & \ {\scriptscriptstyle r-k} \\
      \end{block}
      & & &
    \end{blockarray}
    \in \Zbb_{2}^{r\times 2n}
\end{align}
for binary matrices $A_{1}$, $A_{2}$, $B$, $C$, $D$, $E$ of suitable dimensions (specified by the top and right labels). In particular, Eq.~\eqref{eq:standard-form} corresponds to the ``standard form'' presented in  \cite[Sec.~10.5.7]{Nielsen-Chuang}, up to a relabeling of qubits. Next, we define
\begin{align}
G_z&\equiv \begin{blockarray}{ccccccc}
{\scriptscriptstyle n-r} & {\scriptscriptstyle r-k} & {\scriptscriptstyle k} & {\scriptscriptstyle n-r} & {\scriptscriptstyle r-k} & {\scriptscriptstyle k} \\
      \begin{block}{[ccc|ccc]l}
      0 & 0 & 0 & I & 0 & A_1^{\top} & \ {\scriptscriptstyle n-r} \\ 
      \end{block}
      & & &
    \end{blockarray} \in \Zbb_{2}^{(n-r)\times 2n} , 
    \label{eq:Gz} \\
G_x& \equiv \begin{blockarray}{ccccccc}
{\scriptscriptstyle n-r} & {\scriptscriptstyle r-k} & {\scriptscriptstyle k} & {\scriptscriptstyle n-r} & {\scriptscriptstyle r-k} & {\scriptscriptstyle k} \\
      \begin{block}{[ccc|ccc]l}
      I & D^{\top} & 0 & 0 & 0 & B^{\top} & \ {\scriptscriptstyle n-r} \\ 
      \end{block}
      & & &
    \end{blockarray} \in \Zbb_{2}^{(n-r)\times 2n} .
    \label{eq:Gx}
\end{align}
Notice that the rows of $G_z$ and $G_x$ commute with every stabilizer generator in Eq. \eqref{eq:standard-form}, since
\[ S\Lambda G_z^{\top} = 0\ \ \text{ and }\ \ S\Lambda G_x^{\top} = 0. \]
Further, it holds $G_z \Lambda G_x^{\top}=I$. This in turn implies that the Pauli operators defined by the rows of $G_z$ and $G_x$ yield a canonical set of generators. We now define a set of generators of $\cS$, $\{S_i\}_{i=1}^{r}$, via the rows of $S$ in Eq. \eqref{eq:standard-form} and a set of encoded $X$ operators, $\{\overline{X}_i\}_{i=1}^{n-r}$, encoded $Z$ operators, $\{\overline{Z}_i\}_{i=1}^{n-r}$, via the rows of $G_x$, $G_z$, respectively. If we assume, without loss of generality, that the first $r$ qubits represent the physical input (upload) qubits, the latter encoded logical operators are seen to satisfy condition E2.

To complete the proof, we need to find a set of correction operators satisfying E1.-E3. To this end, we first observe that the rows of the matrix
\[ \overline{S}\equiv \left[\begin{array}{c}
      	S\\
	\hline
	G_z \\
	\hline
	G_x
\end{array}\right] \in \Zbb_{2}^{(2n-r)\times 2n} \]
are linearly independent by construction. By using Propositions 10.3-10.4 in \cite{Nielsen-Chuang}, there exists a set of correction operators $\{C_i\}_{i=1}^{r}$ in $\cP_n$ satisfying the desired conditions. More precisely, since the rows of $\overline{S}$ are linearly independent, there always exists a vector $x_{i}\in\Zbb_{2}^{2n}$ such that $\overline{S}\Lambda x_{i}=e_{i}$, where $e_{i}$ denotes the $(2n-r)$-dimensional vector with a 1 in the $i$-th position and $0$s elsewhere. Suppose that $x_{i}^{\top}$ corresponds to the row vector representation of $C_{i}$. In view of condition $\overline{S}\Lambda x_{i}=e_{i}$, $C_{i}$ commutes with all Pauli operators defined via the rows of $\overline{S}$, with the only exception of the $i$-th operator. We have therefore constructed a set $\{C_{i}\}_{i=1}^{r}$ that satisfies conditions E1.-E3. \qed
\end{proof}

Through the construction proposed in the proof, we can explicitly obtain the operators $R_{p,q}$ of Eq. \eqref{eq:encode}. These satisfy some additional interesting properties (recall also Proposition \ref{prop:DE}):

\begin{cor}[Basin of attraction]
\label{cor:basin}
Given the construction of Theorem \ref{thm:stabilizer-enc}, {the Pauli operators $R_{p,q}$ are pairwise commuting and identify a stabilizer subgroup, ${\cal G}$, on the physical gauge qubits. The latter identifies the basin of attraction ${\cal B}_{\bar{\rho}}$ of co-factor (gauge) states $\sigma$ for the physical encoder, that is, $\Phi_P(\rho_L\otimes \sigma)  = \bar{\rho} \in \cC$, for all $\sigma\in {\cal B}_{\bar{\rho}}$, where $\rho_L$ is the state on the first $n-r$ qubits to be encoded, and $\sigma$ has support on the subspace stabilized by ${\cal G}$.}

\end{cor}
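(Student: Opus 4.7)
The plan is to establish the corollary in four stages, leveraging the explicit check-matrix construction of Theorem \ref{thm:stabilizer-enc} and the basin characterization already provided by Proposition \ref{prop:DE}. First I would confirm that each $R_{p,q}$ is a well-defined Pauli operator on the gauge qubits; then show pairwise commutativity of these operators; then argue that the resulting abelian subgroup cannot contain $-I$, hence is a genuine stabilizer group; finally identify the basin of attraction with the corresponding stabilized subspace.

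Commutativity is the conceptual core. By construction, the encoded set $\{\overline X_i,\overline Z_i\}_{i=1}^{n-r}$ is a canonical Pauli basis, and therefore obeys exactly the same symplectic commutation relations as the upload canonical basis $\{X_i,Z_i\}_{i=1}^{n-r}$. Concretely, writing $A_{p,q}\equiv\prod_i \overline X_i^{p_i}\overline Z_i^{q_i}$ and $B_{p,q}\equiv\prod_i X_i^{p_i}Z_i^{q_i}$, one has $A_{p,q}A_{p',q'}=(-1)^{\omega}A_{p',q'}A_{p,q}$ and $B_{p,q}B_{p',q'}=(-1)^{\omega}B_{p',q'}B_{p,q}$ with the \emph{same} symplectic sign $\omega=\omega((p,q),(p',q'))$. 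Inserting Eq. \eqref{eq:encode} and using the multiplicativity of commutation across the bipartition $\cH^{\text{upload}}\otimes\cH^{\text{gauge}}$, the sign on the gauge factor must be trivial, giving $[R_{p,q},R_{p',q'}]=0$ for all indices. Consequently the $R_{p,q}$ generate an abelian subgroup ${\cal G}$ of the Pauli group $\cP_{r}$ on $\cH^{\text{gauge}}$.

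To upgrade ${\cal G}$ to a bona fide stabilizer group, I need to rule out $-I\in{\cal G}$. For this I would exploit the explicit form of the generators from Eqs. \eqref{eq:Gz}-\eqref{eq:Gx}: since the rows of $G_x$ and $G_z$ are binary vectors in $\Zbb_2^{2n}$, the basic generators $R_i^X\equiv R_{e_i,0}$ and $R_i^Z\equiv R_{0,e_i}$ are signless tensor products of Pauli matrices on the gauge qubits. Because these generators pairwise commute by the previous step, reordering factors in any product produces no additional signs, and so every element of ${\cal G}$ is a signless Hermitian Pauli; in particular, whenever a product $R_{p,q}$ collapses to a scalar through linear dependencies among the $R_i^{X,Z}$, that scalar must equal $+I$ rather than $-I$. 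Thus $-I\notin{\cal G}$ and ${\cal G}$ is a stabilizer group.

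Granted these facts, the basin characterization is immediate from Proposition \ref{prop:DE}: since each $R_{p,q}$ is a Hermitian Pauli with spectrum $\{\pm 1\}$, the condition $\Tr(R_{p,q}\sigma)=1$ for every $(p,q)$ is equivalent to the support of $\sigma$ being contained in the common $+1$-eigenspace of $\{R_{p,q}\}$, i.e., in the subspace stabilized by ${\cal G}$. The step I expect to require the most care is the sign-tracking in the third paragraph: to make it rigorous one should describe the map $(p,q)\mapsto R_{p,q}$ via the symplectic representation induced by $G_x$ and $G_z$ and check that this defines a group homomorphism to the signed Pauli group on $\cH^{\text{gauge}}$ whose image entirely avoids $-I$; the first two paragraphs are essentially formal, and the fourth is a direct rewriting of Proposition \ref{prop:DE}.
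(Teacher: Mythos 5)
Your proof is correct and reaches all the conclusions of the corollary, but the central commutativity step is argued by a genuinely different route than the paper's. The paper simply restricts the check matrices $G_z,G_x$ of Eqs.~\eqref{eq:Gz}--\eqref{eq:Gx} to the gauge-qubit columns, obtaining $R_z,R_x$ as in Eqs.~\eqref{eq:Rz}--\eqref{eq:Rx}, and verifies by inspection of the block structure that $R_z\Lambda R_z^{\top}=R_x\Lambda R_x^{\top}=R_x\Lambda R_z^{\top}=0$; commutativity is a two-line symplectic computation tied to the standard form. You instead derive it abstractly: since the encoded operators and the upload operators are both canonical bases with identical symplectic commutation signs, Eq.~\eqref{eq:encode} forces the gauge co-factors $R_{p,q}$ to carry trivial signs, hence to commute. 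Your argument is more general --- it applies to any encoder satisfying E2 whose encoded operators form a canonical basis, independently of the standard-form construction --- while the paper's explicit computation buys, as a by-product, the observations (i)--(iii) following the corollary (e.g., that $|+\rangle^{\otimes r-k}\otimes|0\rangle^{\otimes k}$ always lies in the basin). Your additional step ruling out $-I\in{\cal G}$ addresses a point the paper glosses over and is worth keeping, but its justification is too loose as stated: pairwise commutativity of signless Hermitian Paulis does \emph{not} by itself exclude $-I$ (e.g., $XX$, $YY$, $ZZ$ pairwise commute yet multiply to $-I$). The clean fix is precisely the explicit form \eqref{eq:Rz}--\eqref{eq:Rx}: every generator acts as $I$ or $X$ on the first $r-k$ gauge qubits and as $I$ or $Z$ on the last $k$, so ${\cal G}$ is an elementary abelian $2$-group admitting the common $+1$ eigenvector $|+\rangle^{\otimes r-k}\otimes|0\rangle^{\otimes k}$, whence $-I\notin{\cal G}$. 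The final identification of the basin from $\Tr(R_{p,q}\sigma)=1$ via Proposition \ref{prop:DE} coincides with the paper's.
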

\begin{proof} From the check matrix representation of the logical operators in Eqs.~\eqref{eq:Gz}-\eqref{eq:Gx}, the Pauli operators $R_{p,q}$ are defined via the rows of the two check matrices
\begin{align}
	R_z&=\begin{blockarray}{ccccc}
         {\scriptscriptstyle r-k} & {\scriptscriptstyle k} &  {\scriptscriptstyle r-k} & {\scriptscriptstyle k} \\
      \begin{block}{[cc|cc]l}
       0 & 0 & 0 & A_1^{\top} & \ {\scriptscriptstyle n-r} \\ 
      \end{block}
      & & &
    \end{blockarray} \in \Zbb_{2}^{(n-r)\times 2r} ,
    \label{eq:Rz} \\
    R_x&=\begin{blockarray}{ccccc}
        {\scriptscriptstyle r-k} & {\scriptscriptstyle k}& {\scriptscriptstyle r-k} & {\scriptscriptstyle k} \\
      \begin{block}{[cc|cc]l}
      D^{\top} & 0 & 0 & B^{\top} & \ {\scriptscriptstyle n-r} \\ 
      \end{block}
      & & &
    \end{blockarray} \in \Zbb_{2}^{(n-r)\times 2r} .
    \label{eq:Rx}
\end{align}
Since $R_z\Lambda R_z^{\top} = 0$, $R_x\Lambda R_x^{\top} = 0$, and $R_x\Lambda R_z^{\top} = 0$, it follows that the Pauli operators $R_{p,q}$ always commute pairwise and, therefore, form a stabilizer subgroup $\cal G$. Thus, the basin of attraction of the code, that is identified with the co-factor states $\sigma$ such that $\Tr(R_{p,q}\sigma )=1,$ $\forall p,q$, is determined by all the states with support on the subspace stabilized by $\cal G$. \qed
\end{proof}

A few additional observations can be made, regarding the structure of the $R_{p,q}$ operators and the basin of attraction:

{\bf (i)} In view of the check matrix representation of operators $R_{p,q}$ in Eqs. \eqref{eq:Rz}-\eqref{eq:Rx}, it follows that the basin of attraction \emph{always} contains the product state $|+\rangle^{\otimes r-k} \otimes |0\rangle^{\otimes k}$, where $|+\rangle$ denotes the $+1$-eigenstate of $X$. This can be considered as an easy to prepare, nominal gauge state for the initialization map $\Phi_L$.

{\bf (ii)} Since the operators $R_{p,q}$ in Eqs. \eqref{eq:Rz}-\eqref{eq:Rx} generate a stabilizer group, the basin of attraction has (at least) dimension $2^{r-\bar{r}}$, where $\bar{r}$ is the (row) rank of the matrix 
\[ \bar{R}\equiv \left[\begin{array}{c}
      	R_z\\
	\hline
	R_x
      \end{array}\right] \in \Zbb_{2}^{2(n-r)\times 2r}. \]
In the worst case where $\bar{R}$ is of full row rank, the basin of attraction has dimension $2^{3r-2n}$.

{\bf (iii)} If $D^{\top}$ in Eq. \eqref{eq:Rx} has $d_{x}$ all-zero columns, then the basis of attraction contains $d_{x}$ physical input qubits. Further, if the column block of $\bar{R}$,
\[ \left[\begin{array}{c}
      	A_{1}^{\top}\\
	B^{\top}
      \end{array}\right] \in \Zbb_{2}^{2(n-r)\times k}, \]
has $d_{z}$ all-zero columns, then the basis of attraction contains $d_{z}$ additional physical input qubits.

\section{Illustrative examples}

In this section, we exemplify the some of the general concepts and the construction of a FTDE presented in the previous section in a number of paradigmatic stabilizer quantum codes.  While we first consider some of the more standard stabilizer codes employed for encoding a single logical qubit, we discuss separately (in Sec.  in Sec. \ref{sub:tc}) the simplest example of a topological stabilizer code, Kitaev's toric code on the square lattice \cite{Kitaev2003,Dennis2002}. This is motivated by the additional special features that these codes enjoy -- namely, {\em locality} of all their stabilizer generators -- which also naturally suggest using an approach different than the general one based on their standard form for constructing the desired correction operators and dissipative maps.

\subsection{Standard stabilizer codes}
\label{sub:examples}

\subsubsection{Shor's 9-bit code} 

Shor's $9$-qubit code \cite{ShorCode} provided, historically, the first example of a {\em concatenated quantum code}, whereby the ability to correct arbitrary single-qubit errors is achieved by a ``nested'' repetition-code structure and linearity \cite{Nielsen-Chuang}. The code space is 
$$ \cH_\cC =\text{span}\Big\{ \frac{1}{2\sqrt{2}}  (|000\rangle \pm |111\rangle)^{ \otimes 3}   \Big \} ,$$
where the sign $+$ ($-$) corresponds to $|0_L\rangle$ ($|1_L\rangle$), respectively. That is, the two basis states have the form of a phase-flip code, where each of the three qubits of such a code has then been encoded in the bit-flip code described in Example \ref{ex:rep1}.  As mentioned in Remark \ref{rem1}, in the stabilizer language and seen as a subspace code, this corresponds to a $[[9,1,3]]$ code. Explicitly, in the standard form we also used in Theorem \ref{thm:stabilizer-enc}, the stabilizer generators are given by
\begin{align*}
S_{1}&=XXX II XXX I,\\
S_{2}&=XIIXXXXIX,\\
S_{3}&=IZIIIIIZI,\\
S_{4}&=IIZIIIIZI,\\
S_{5}&=IIIZIIIIZ,\\
S_{6}&=IIIIZIIIZ,\\
S_{7}&=ZIIIIZIII,\\
S_{8}&=ZIIIIIZII.
\end{align*}
The corresponding logical operators are
\begin{align*}
\overline{X}=XIIIIXXII,\qquad 
\overline{Z}=ZIIIIIIZZ.
\end{align*}
A set of correction operators $\{C_{i}\}_{i=1}^{8}$ such that $[C_{i},S_{j}]=2\delta_{ij}C_{i}S_{j}$ for all $i,j=1,2,\dots,8$ and $[C_{i},\overline{X}]=[C_{i},\overline{Z}]=0$ for all $i=1,2,\dots,8$, is then 
\begin{align*}
C_{1}&=IZIIIIIII,\\
C_{2}&=IIIZIIIII,\\
C_{3}&=IXIIIIIII,\\
C_{4}&=IIXIIIIII,\\ 
C_{5}&=IIIXIIIII,\\
C_{6}&=IIIIXIIII,\\
C_{7}&=IIIIIXIII,\\
C_{8}&=IIIIIIXII.
\end{align*}

Using Corollary \ref{cor:basin}, the basin of attraction of this code is found to consists of $8$-qubit co-factor states $\sigma \in {\cal B}_{\bar{\rho}}$ such that
\[  \Tr(R_{p,q}\sigma )= \Tr(I^{\otimes 4}\otimes X^{i}\otimes X^{i}\otimes Z^{j}\otimes Z^{j}\,\sigma )=1,\quad i,j=0,1. \]
This means that our proposed FTDE encodes an arbitrary state $\Phi_{8}\circ\cdots\circ\Phi_{1}(\rho_L \otimes \sigma)=\bar{\rho}\in \cC$, \emph{robustly} with respect to both initialization of $\sigma \in {\cal B}_{\bar{\rho}}$ and with respect to the application order of the maps.  In particular, any product state of the form $|\psi_{1}\rangle\otimes|\psi_{2}\rangle\otimes|\psi_{3}\rangle\otimes|\psi_{4}\rangle\otimes|+\rangle^{\otimes 2}\otimes|0\rangle^{\otimes 2}$, with $|\psi_{i}\rangle$, $i=1,2,3,4$, being arbitrary qubits, belongs to ${\cal B}_{\bar{\rho}}$.  As we noted in Sec. \ref{sub:sc}, Shor's code may be seen as a $[[9,1,3,3]]$ \cite{Poulin2005} or even a $[[9,1,4,3]]$ subsystem code \cite{Breuckmann-thesis}. Thus, since an identification between the gauge degrees of freedom and the physical qubits exists before encoding,   robustness against initialization errors in the state of these gauge qubits is also available through unitary encoding procedures although, in unitary settings, it is usually necessary to trade this extra tolerance for a reduced complexity of the encoding circuit itself \cite{Klappenecker2009}.

\subsubsection{Steane's 7-bit code}

Steane's code, independently discovered as one of the first known quantum codes \cite{SteaneCode}, also achieves QEC against arbitrary single-qubit errors, however without resorting to a concatenated structure. The stabilizer generators for this $[[7,1,3]]$ code, when the check matrix is put in standard form as before, read
\begin{align*}
S_{1}&=XIXXXII,\\
S_{2}&=XXIXIXI,\\
S_{3}&=IXXXIIX,\\
S_{4}&=ZZIIZIZ,\\
S_{5}&=ZIZIIZZ,\\
S_{6}&=IIIZZZZ, 
\end{align*}
whereas he corresponding logical operators are
\begin{align*}
\overline{X}=XXXIIII,\qquad 
\overline{Z}=ZIIIZZI.
\end{align*}
A set of correction operators $\{C_{i}\}_{i=1}^{6}$ such that $[C_{i},S_{j}]=2\delta_{ij}C_{i}S_{j}$ for all $i,j=1,2,\dots,6$ and $[C_{i},\overline{X}]=[C_{i},\overline{Z}]=0$ for all $i=1,2,\dots,6$, is 
\begin{align*}
C_{1}&=ZIZZIII,\\
C_{2}&=ZZIZIII,\\
C_{3}&=IZZZIII,\\
C_{4}&=IXIIIII,\\ 
C_{5}&=IIXIIII,\\
C_{6}&=IIIXIII.
\end{align*}

The basin of attraction now consists of $6$-qubit co-factor (gauge) states $\sigma$ such that
\[  \Tr(R_{p,q}\sigma )= \Tr(X^{i}\otimes X^{i}\otimes I \otimes Z^{j}\otimes Z^{j} \otimes I\,\sigma )=1,\quad i,j=0,1. \]
In particular, any product state of the form $|+\rangle^{\otimes 2}\otimes |\psi_{1}\rangle\otimes|0\rangle^{\otimes 2}\otimes |\psi_{2}\rangle$, with $|\psi_{i}\rangle$, $i=1,2$, being arbitrary qubits, belongs to the basin of attraction. Unlike for Shor's code, no gauge symmetry (hence no gauge qubits) are present in this case. Therefore, dissipative encoders provide in this case the only means to achieve error-tolerant initialization.

\subsubsection{The 5-qubit perfect code}

The $5$-bit code provides the simplest example of a stabilizer code that is not {\em additive}, namely, one that is not constructed from classical codes with some special properties \cite{Nielsen-Chuang}. It is also provably the smallest quantum code that can correct arbitrary single-qubit errors, hence often referred to as a ``perfect'' code \cite{PerfectCode}. The stabilizer generators for this $[[5,1,3]]$ code are, in their standard form, 
\begin{align*}
S_{1}&=YYZIZ,\\
S_{2}&=XIXZZ,\\
S_{3}&=XZZXI,\\
S_{4}&=YZIZY, 
\end{align*}
with the corresponding logical operators 
\begin{align*}
\overline{X}=X\,Z\,I\,I\,Z,\qquad 
\overline{Z}=Z\,Z\,Z\,Z\,Z.
\end{align*}
A set of correction operators $\{C_{i}\}_{i=1}^{4}$ such that $[C_{i},S_{j}]=2\delta_{ij}C_{i}S_{j}$ for all $i,j=1,2,3,4$ and $[C_{i},\overline{X}]=[C_{i},\overline{Z}]=0$ for all $i=1,2,3,4$, is 
\begin{align*}
C_{1}&=ZIXIX,\\
C_{2}&=YIXXX,\\
C_{3}&=YXXXI,\\
C_{4}&=ZXIXI.
\end{align*}

In this case, the basin of attraction consists of $4$-qubit co-factor (gauge) states $\sigma$ such that
\[  \Tr(R_{p,q}\sigma )= \Tr((Z^{i}Z^{j})\otimes Z^{j}\otimes Z^{j}\otimes (Z^{i}Z^{j})\,\sigma )=1,\quad i,j=0,1. \]
In particular, the product state $|0\rangle^{\otimes 4}$ belongs to the basin of attraction. Again, no representation with gauge qubits is available for this code and hence a dissipative quantum circuit $\Phi_{4}\circ\cdots\circ\Phi_{1}$ is the only way to warrant a non-trivial robustness in encoding.

\subsection{Topological stabilizer codes: Kitaev's toric code}
\label{sub:tc}

The toric code, $\cC_T$, in two dimensions employs a 2D lattice of $2L^2$ physical qubits arranged on the edges of the squares of a grid \footnote{The name ``toric code'' is due to identifying the eastern and western border qubits as neighboring one another, while similarly for the northern and southern border qubits; hence, the geometry and topology induced by the adjacency of qubits is that of a flat 2-torus.}. As we already remarked, a key feature of this topological code is that \emph{all} the stabilizer operators are geometrically local. Neighboring quartets of qubits either surround a square of the grid (``plaquette'') or surround the intersection of a vertical and a horizontal line of the grid (``vertex''). To each plaquette $p$ and to each vertex $v$ (see Fig. \ref{fig:labels}a), we assign a four-body stabilizer acting on the corresponding qubits defined, respectively, by
\begin{equation}
H_p \equiv (Z^{\otimes 4})_p\otimes \identity_{\bar{p}}, \qquad 
H_v \equiv (X^{\otimes 4})_v\otimes \identity_{\bar{v}}, 
\end{equation}
where $\bar{p}$ denotes the complement to $p$, and similarly for $\bar{v}$.  Since each plaquette overlaps with an even number of systems in any vertex, we have $[H_p,H_v]=0$ for all $p$ and $v$. One consequence of the toroidal geometry is that the above plaquette and vertex Hamiltonians are not algebraically independent, since 
\begin{equation}
\label{eq:redundant}
 \prod_p H_p =\identity\,\,\,\textup{and}\,\,\, \prod_v H_v =\identity.
\end{equation}
These algebraic dependencies lead to the definition of a set of logical operators corresponding to a two-qubit logical subspace. This space constitutes the toric code and may be defined as the space of vectors $\ket{\psi}$ satisfying $H_p\ket{\psi}=H_v\ket{\psi}=\ket{\psi}$ for all $p$ and $v$. Physically, $\cC_T$ corresponds to the (four-fold degenerate) ground space of the \emph{toric-code Hamiltonian} $H_{T}\equiv -(\sum_p H_p +\sum_v H_v).$ Later on, we will use the fact that it suffices to associate $\cC_T$ to the $+1$-eigenspace of all but one $H_p$ and all but one $H_v$. This follows from the algebraic redundancy expressed in Eq. (\ref{eq:redundant}).

\begin{figure}[t]
    \centering
    \subfloat[Plaquette and vertex stabilizer operators]
    {{\includegraphics[width=0.3\columnwidth,viewport= 0 0 760 760,clip]{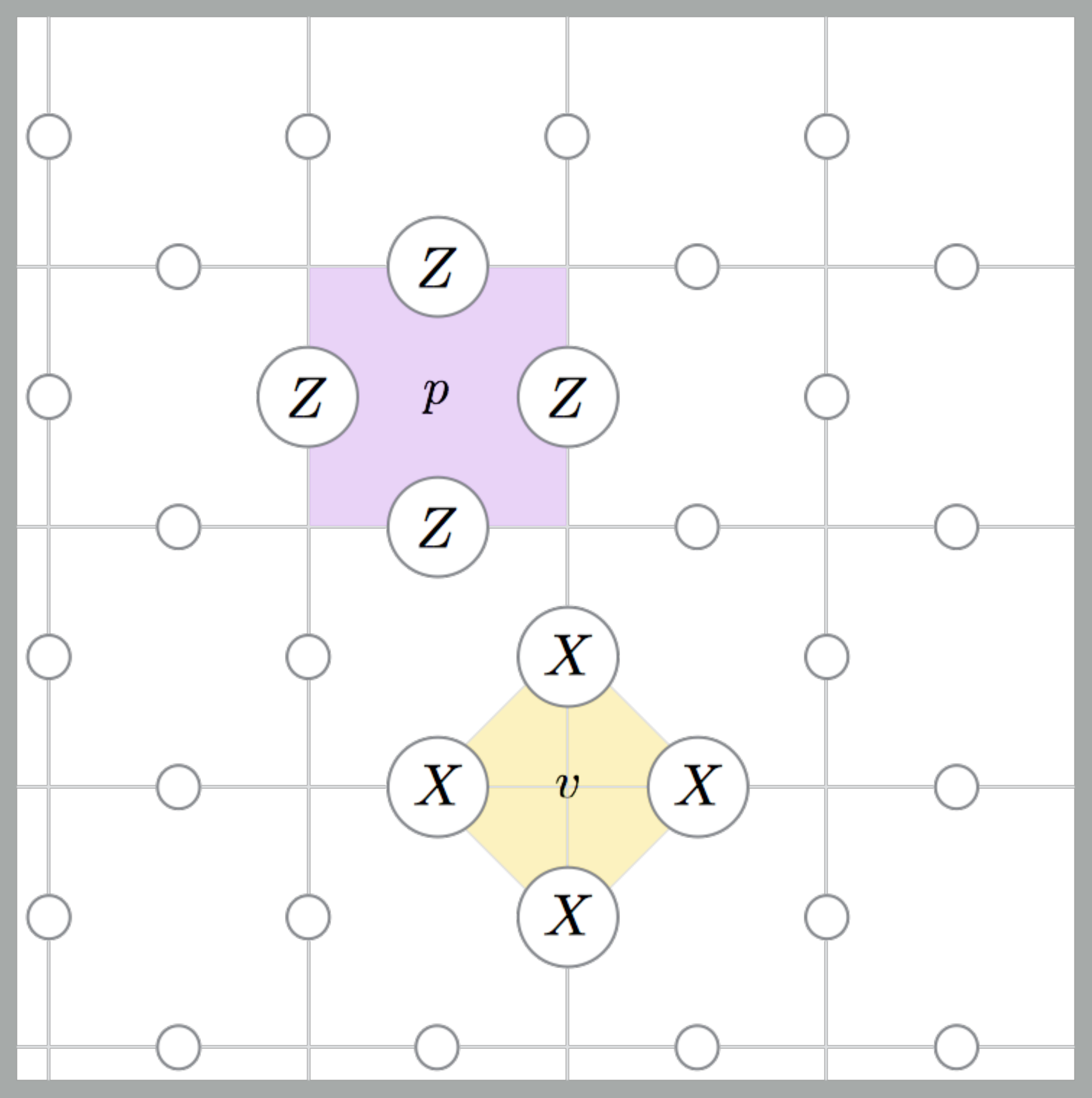}}}%
    \qquad\qquad
    \subfloat[Qubit labeling scheme]
    {{\includegraphics[width=0.3\columnwidth,viewport= 0 0 760 760,clip]{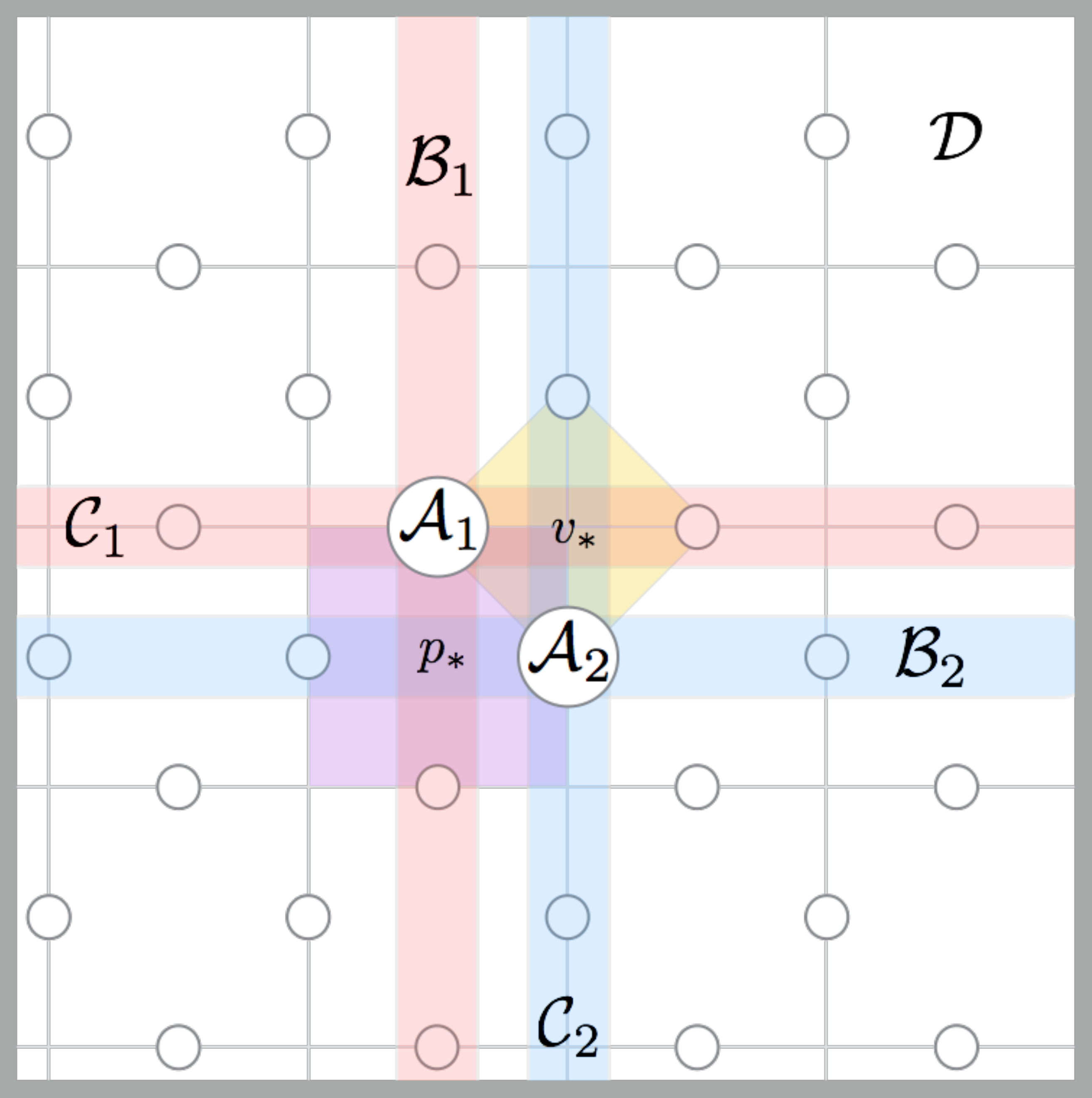}}}%
    \caption[Stabilizers and labeling scheme for 2D toric code]{(a) The four-body stabilizer operators of the toric code act on vertices $v$ as $X^{\otimes 4}$ or on plaquettes as $Z^{\otimes 4}$. (b) $\mathcal{A}_1$ and $\mathcal{A}_2$ are the two upload qubits whose initial state is mapped into the code. $\mathcal{B}_1$ and $\mathcal{C}_1$ are  the systems (in addition to $\mathcal{A}_1$) on which the code's logical operators for the first encoded qubit are defined to act. The same holds for $\mathcal{B}_2$ and $\mathcal{C}_2$ with respect to the second encoded qubit. $\mathcal{D}$ denotes the remaining qubits. The systems $\mathcal{B}_1$, $\mathcal{C}_1$, $\mathcal{B}_2$, and $\mathcal{C}_2$ must be properly initialized in order to achieve a faithful encoding.}%
    \label{fig:labels}%
\end{figure}

\begin{figure}[h]
    \centering
    \subfloat[Logical $X$s of toric code]
    {{\includegraphics[width=0.3\columnwidth,viewport= 0 0 760 760,clip]{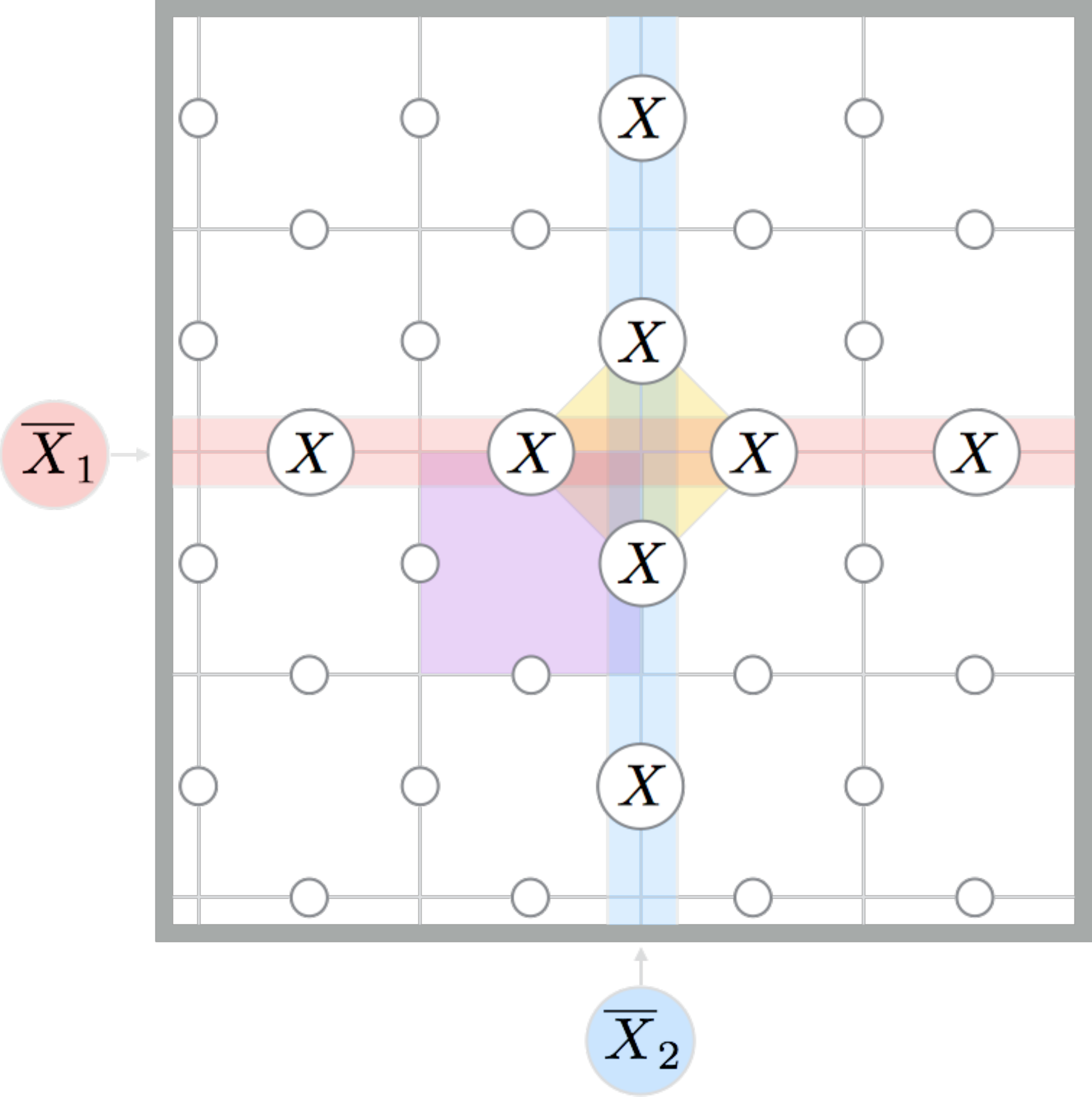} }}%
    \qquad\qquad
    \subfloat[Logical $Z$s of toric code]
    {{\includegraphics[width=0.3\columnwidth,viewport= 0 -109 760 760,clip]{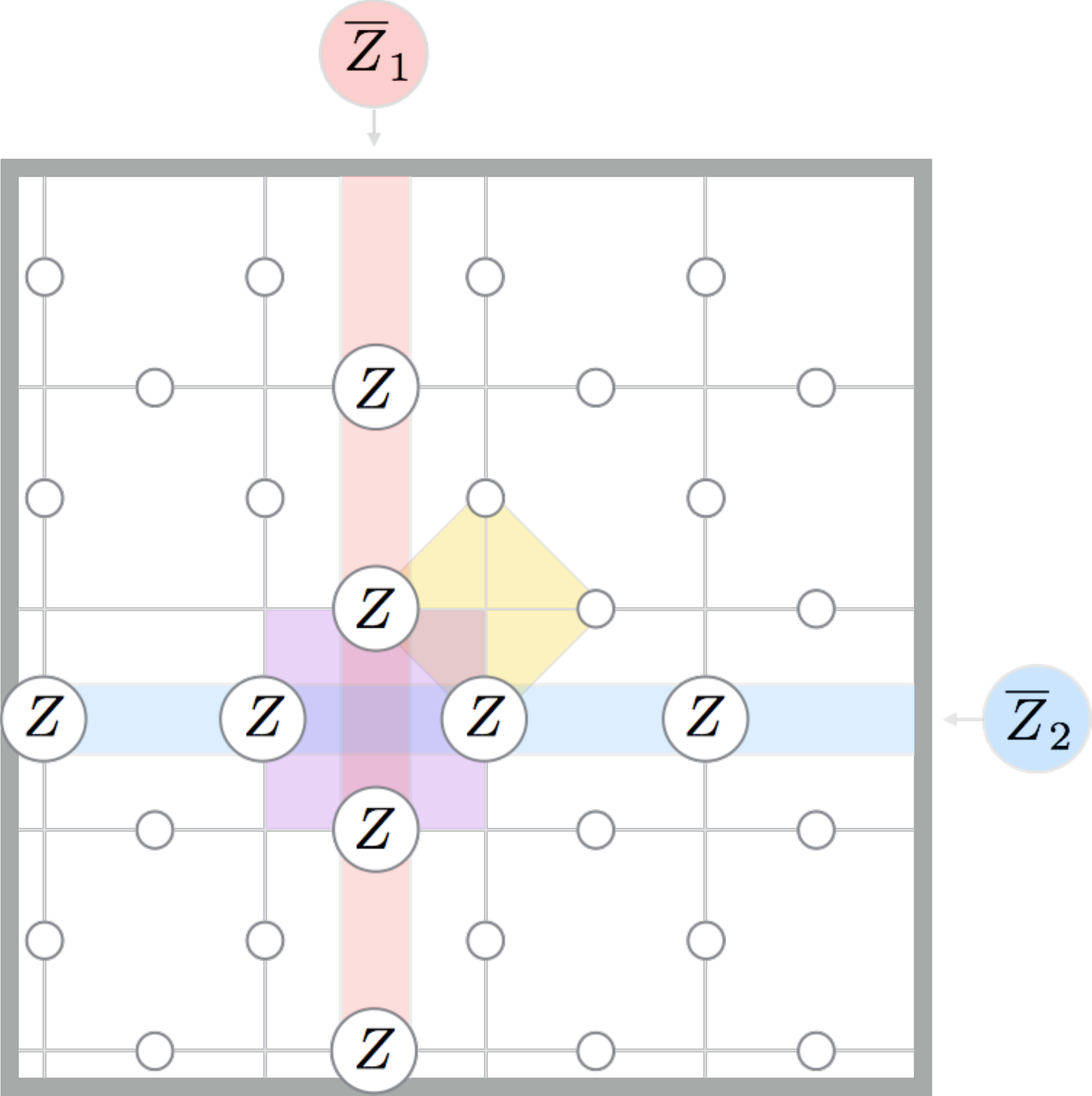}}}%
    \caption[Logical operators for toric code]{A definition of logical operators for the toric code. 
    Logical operators consist of topologically non-trivial loops of local bit-flip or local phase-flip errors.}%
    \label{fig:logical}
\end{figure}

In \cite{Dengis2014}, the authors provide a Liouvillian, constructed as a sum of plaquette- and vertex-acting terms, which generates a CDE from two localized physical qubits into $\cC_T$. Following their labeling scheme (see Fig. \ref{fig:labels}b), the two neighboring upload qubits $\mathcal{A}_1$ and $\mathcal{A}_2$ are chosen so as to share a plaquette and vertex, which are labeled $p_*$ and $v_*$. Then, the qubits of the vertical and horizontal strips which pass through $p_*$ (except for $\mathcal{A}_1$ and $\mathcal{A}_2$) are each prepared in $\ket{+}$. These strips are labeled $\mathcal{B}_1$ and $\mathcal{B}_2$, respectively. Similarly, the qubits of the bands passing through $v_*$ are each prepared in $\ket{0}$. If these strips are labeled $\mathcal{C}_1$ and $\mathcal{C}_2$, respectively, {this results in the state $|\phi\rangle_{\mathcal{B}\mathcal{C}} \equiv |+\rangle^{\otimes L-1}_{\mathcal{B}_1}\otimes |+\rangle^{\otimes L-1}_{\mathcal{B}_2}\otimes |0\rangle^{\otimes L-1}_{\mathcal{C}_1}\otimes |0\rangle^{\otimes L-1}_{\mathcal{C}_2}$. The logical operators are chosen as (see Fig. \ref{fig:logical}): 
\begin{align*}
\overline{X}_1&\equiv X_{\mathcal{A}_1}\otimes X^{\otimes L-1}_{\mathcal{B}_1}\otimes\identity_\cD, \qquad 
\overline{Z}_1\equiv Z_{\mathcal{A}_1}\otimes Z^{\otimes L-1}_{\mathcal{C}_1}\otimes\identity_\cD, \\  
\overline{X}_2&\equiv X_{\mathcal{A}_2}\otimes X^{\otimes L-1}_{\mathcal{B}_2}\otimes\identity_\cD, \qquad 
\overline{Z}_2\equiv X_{\mathcal{A}_2}\otimes Z^{\otimes L-1}_{\mathcal{C}_2}\otimes\identity_\cD . 
\end{align*}
The authors of \cite{Dengis2014} show that, with this initialization, the state of $\mathcal{A}_1\otimes\mathcal{A}_2$ is driven towards the corresponding state of $\cC_T$} by means of their constructed Lindblad dynamics. After reviewing the CPTP correction maps used to construct such a Lindblad dynamics, we show that a judicious ordering of these maps does, in fact, constitute a FTDE. 

The stabilizer generators are $\{H_p,H_v\}$, where the set ranges over all plaquettes and vertices, except for $p_*$ and $v_*$ as specified above. Correction maps $\Phi_p$ and $\Phi_v$ are associated to each of these stabilizer generators, { for a total number of $t\equiv 2(L^2-1)$ maps. 
Let plaquettes and vertices be labeled according to their lattice coordinates with respect to $p_*$ and $v_*$, so that the plaquette $p_{\alpha,\beta}$ and vertex $v_{\alpha,\beta}$ lie $\alpha$ sites north and $\beta$ sites east of $p_*$ and $v_*$, respectively. If we use a tensor product structure where, for each plaquette and vertex system, the north, east, south, and west qubits are $N\otimes E\otimes S\otimes W$, we further define the following (unitary) correction operators:}
\begin{align*}
C_{p_{\alpha,0}}\equiv (\identity\otimes X \otimes \identity\otimes \identity)_p\otimes \identity_{\bar{p}},\nonumber\\ 
C_{p_{\alpha,\beta}}\equiv (X\otimes \identity \otimes \identity\otimes \identity)_p\otimes \identity_{\bar{p}},\nonumber\\
C_{v_{0,\beta}}\equiv (\identity\otimes \identity \otimes Z\otimes \identity)_v\otimes \identity_{\bar{v}},\nonumber\\
C_{v_{\alpha,\beta}}\equiv (\identity\otimes \identity\otimes \identity \otimes Z)_v\otimes \identity_{\bar{v}}.
\end{align*}
Finally, let the Kraus operators for the plaquette and vertex correction maps $\Phi_{p_{\alpha,\beta}}$ and $\Phi_{v_{\alpha,\beta}}$ be labeled $K_{p_{\alpha,\beta}}^{(i)}$ and $K_{v_{\alpha,\beta}}^{(i)}$, respectively. The scheme devised in \cite{Dengis2014} {for continuously correcting residual errors in $\mathcal{A}_1$ and $\mathcal{A}_2$ suggests a choice of ordering for these dissipative maps} as depicted in Fig. \ref{fig:correction}. From the construction in the proof of Theorem \ref{thm:stabilizer-enc}, we expect that the key feature of this choice of correction operators and ordering is that subsequent correction operators commute with all previous stabilizer operators. 

\begin{figure}[t]
    \centering
    \subfloat[Vertex correction maps]
    {{\includegraphics[width=0.3\columnwidth,viewport= 40 40 735 735,clip]{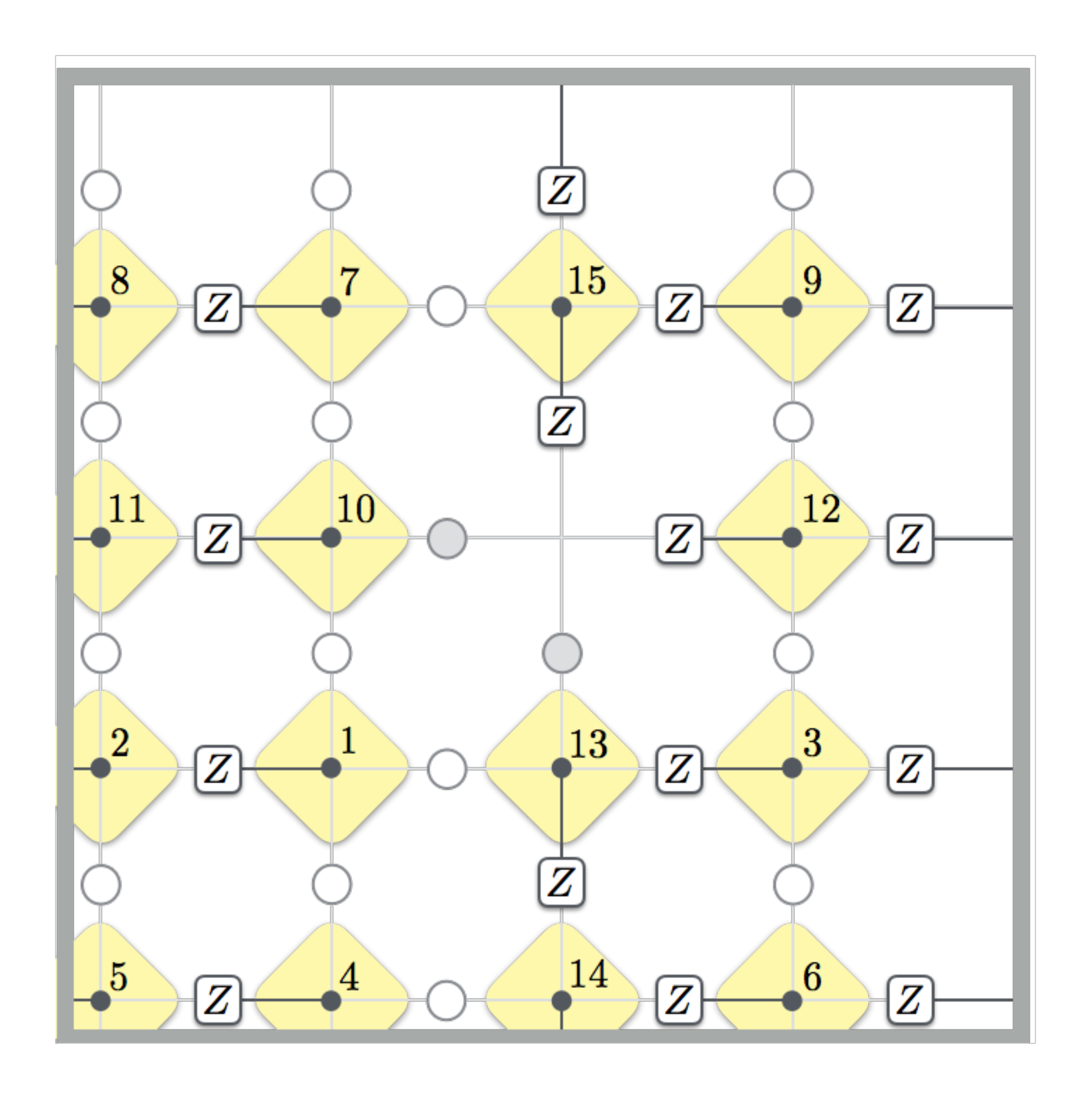} }}
    \qquad\qquad
    \subfloat[Plaquette correction maps]
    {{\includegraphics[width=0.3\columnwidth,viewport= 0 0 760 760,clip]{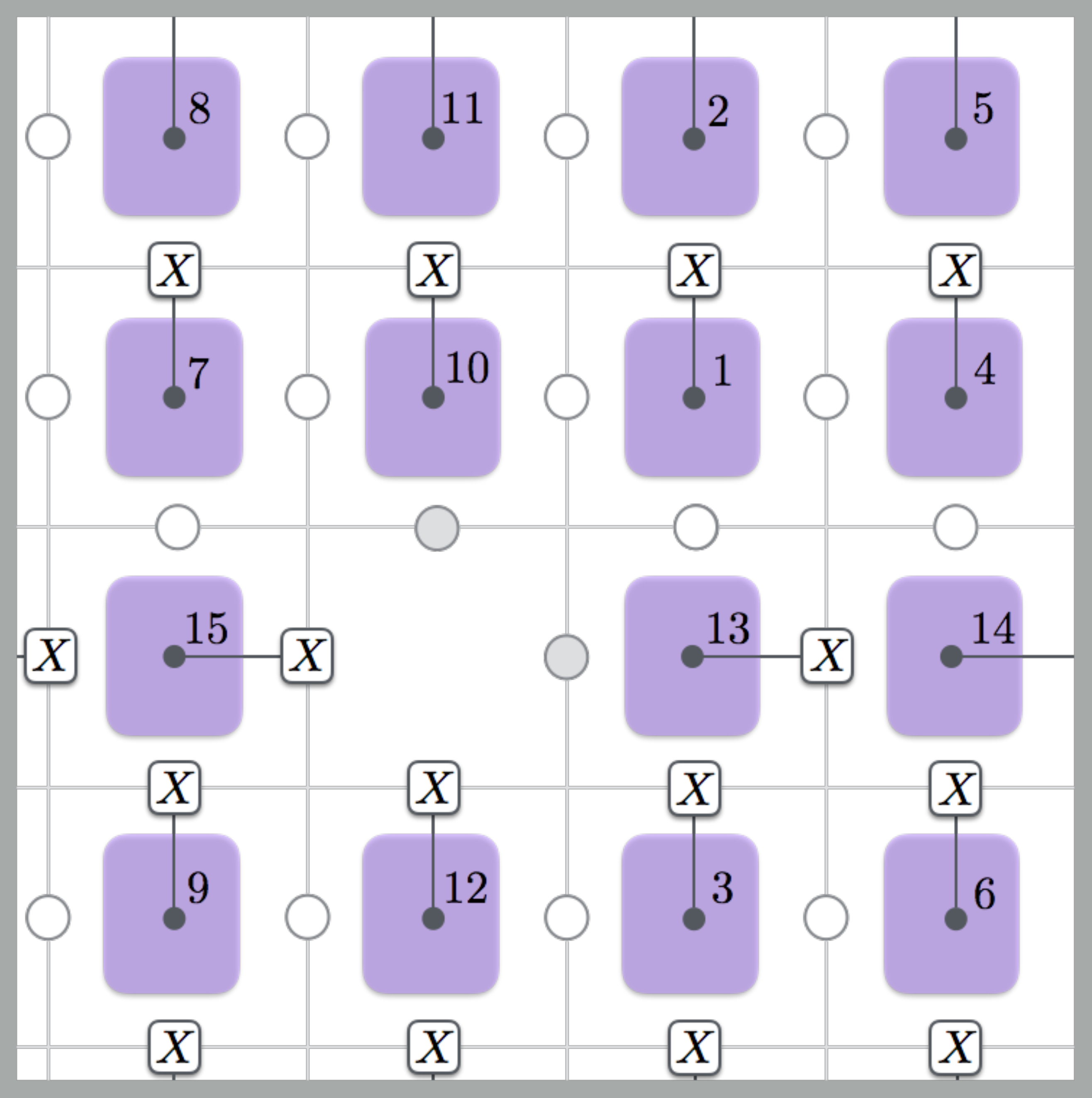}}}%
    \caption[Sequence of correction maps for finite-time encoding]{Sequences of CPTP maps for FTDE. The ordering of the correction maps and the location of correction operators are chosen so that 1) correction operators commute with all four logical operators and 2) subsequent correction operators are applied where no correction map has acted previously.}%
    \label{fig:correction}%
\end{figure}

Each plaquette correction map commutes with each vertex correction map, since exchange of their Kraus operators can, at most, accrue an irrelevant global phase (which cancels in the superoperator). Without loss of generality, we consider the vertex maps to act first. As seen in Fig. \ref{fig:correction}, the ordering among the vertex (respectively, plaquette) correction maps is chosen such that each subsequent correction operator acts where no previous correction map (and hence stabilizer operator) has acted. This verifies that subsequent correction operators commute with all previous stabilizer operators, as desired. 

{Let ${\bm i} \equiv (i_1,\ldots, i_t) \in \{ 0,1 \}^t$ specify the $i=0,1$ values of the plaquette and vertex Kraus operators. Then, moving all correction operators to the right of the stabilizers, each Kraus operator of the FTDE $\Phi_P$ may be written as follows
\begin{align*}
K^{({\bm i})}=(\Pi_1C_1^{i_1})\ldots(\Pi_{t}C_t^{i_t})
=\Pi^{+}(C^{i_1}_{1}\ldots C_{t}^{i_t})
=\Pi^{+}C^{({\bm i})},
\end{align*}
where $\Pi_j\equiv \frac{1}{2}(\identity + H_j)$ and $\Pi^{+}$ is the projector onto the support of $\cC_T$.

The encoder can then be written as 
$$\Phi_{P}(\cdot)= \Phi_1\circ \ldots \circ \Phi_t (\cdot)= 
\Pi^{+} \Big(\sum_{{\bm i}} C^{({\bm i})}\cdot C^{({\bm i}) \dagger} \Big) \Pi^{+}.$$ 
This verifies that the range of the encoder is contained in the code itself.  While the existence of a FTDE for $\cC_T$ is implied by 
Theorem \ref{thm:stabilizer-enc}, the above construction further explicitly shows how this encoding can be achieved using dissipative dynamics that is QL relative to the natural (four-local) neighborhood structure. We note that, while the number of required CPTP maps scales as $L^2$, if we apply commuting maps in parallel, it is possible to equivalently implement $\Phi_P$ 
in a time that scales as $L$. This matches the scaling of the CDE of \cite{Dengis2014} (see also \cite{Pastawski2014}), with the important advantage that in our case we can guarantee \emph{exact} preparation in a finite evolution time, as long as all $\Phi_j$ are implemented correctly.}

Finally, we determine the initializations that ensure a two-qubit state $\rho_L$ on $\mathcal{A}_1\otimes\mathcal{A}_2$ to be encoded into $\overline{\rho}\in \cC_T$. Let the initial state be $\rho_L\otimes \sigma = \rho_{\mathcal{A}_1\mathcal{A}_2}\otimes\sigma_{\mathcal{B}\mathcal{C}\mathcal{D}}$. The basin of attraction for $\sigma$ is determined by the analogue of Eq. (\ref{eq:encodingrelation}), namely, 
\begin{equation*}
\tr{} ({X_{\mathcal{A}_1}^iZ_{\mathcal{A}_1}^jX_{\mathcal{A}_2}^k Z_{\mathcal{A}_2}^l\rho_L})=\tr{}\Big({\overline{X}_1^i\overline{Z}_1^j\overline{X}_2^k\overline{Z}_2^l\Phi_P(\rho_{\mathcal{A}_1\mathcal{A}_2}\otimes\sigma_{\mathcal{B}\mathcal{C}\mathcal{D}})} \Big), 
\quad \forall i,j,k,l \in \{0,1\}. 
\end{equation*}
Since the correction operators and the stabilizers commute with the logical operators, we have $\Phi_P^{\dagger}(\overline{X}_1^i\overline{Z}_1^j\overline{X}_2^k\overline{Z}_2^l)=\overline{X}_1^i\overline{Z}_1^j\overline{X}_2^k\overline{Z}_2^l$. With this, the above equation simplifies to
\begin{align*}
\tr{} ({X_{\mathcal{A}_1}^iZ_{\mathcal{A}_1}^jX_{\mathcal{A}_2}^k Z_{\mathcal{A}_2}^l\rho_L}) &=
{
\tr{} \Big( {(X_{\mathcal{A}_1}^iZ_{\mathcal{A}_1}^jX_{\mathcal{A}_2}^k Z_{\mathcal{A}_2}^l\rho_L)\otimes ((X^i Z^j X^k Z^l)_{\mathcal{B}\mathcal{C}}^{\otimes L-1}\sigma_{\mathcal{B}\mathcal{C}\mathcal{D}})} \Big) }\\
&= {
\tr{} \Big( {X_{\mathcal{A}_1}^iZ_{\mathcal{A}_1}^jX_{\mathcal{A}_2}^k Z_{\mathcal{A}_2}^l\rho_L\,} \Big)\tr{}{\Big( (X^i Z^j X^k Z^l)_{\mathcal{B}\mathcal{C}}^{\otimes L-1}\sigma_{\mathcal{B}\mathcal{C}}} \Big),}
\end{align*}
where in the last step, we have traced out $\mathcal{D}$ to obtain $\sigma_{\mathcal{B}\mathcal{C}}=\tr{}({\mathcal{D}}{\sigma_{\mathcal{B}\mathcal{C}\mathcal{D}}})$. Hence, the state of system $\mathcal{D}$ does not affect the encoding. This determines the basin of attraction to be states $\sigma_{\mathcal{B}\mathcal{C}\mathcal{D}}$ with reduced state on $\mathcal{B}\mathcal{C}$ satisfying $\tr{} ({(X^i Z^j X^k Z^l)_{\mathcal{B}\mathcal{C}}^{\otimes L-1}\sigma_{\mathcal{B}\mathcal{C}}})=1$, for all $i,j,k,l$. A sufficient choice of initial state, as used in  \cite{Dengis2014}, is the pure state $\ket{\phi}_{\mathcal{B}\mathcal{C}}=\ket{+}^{\otimes L-1}_{\mathcal{B}_1}\ket{0}^{\otimes L-1}_{\mathcal{C}_1}\ket{+}^{\otimes L-1}_{\mathcal{B}_2}\ket{0}^{\otimes L-1}_{\mathcal{C}_2}$ introduced before. This demonstrates that a suitable initialization can be implemented locally. The full basin of attraction can be described as comprising any density operator with support in the $+1$-eigenspace of each of the four commuting operators $X_{\mathcal{B}_1}^{\otimes L-1}$, $Z_{\mathcal{C}_1}^{\otimes L-1}$, $X_{\mathcal{B}_2}^{\otimes L-1}$, and $Z_{\mathcal{C}_2}^{\otimes L-1}$.

\section{Conclusion}
\label{sec:end}

The ability to effectively map -- \emph{encode} -- abstractly defined quantum information into logically represented degrees of freedom associated to a quantum code is an essential prerequisite for achieving noise-protected quantum information processing in physical devices. While standard encoding procedures employ discrete-time dynamics as realized by finite sequences of unitary quantum gates, we focused on formalizing the notion of a dissipative (Markovian) quantum encoder. In particular, we argued how the use of discrete-time dissipative dynamics -- a dissipative quantum circuit in the sense of \cite{johnson-FTS} -- can combine advantageous features from both the unitary setting and the continuous-time setting considered in \cite{Dengis2014,encoding-CDC}. Specifically, such encoders can afford extra robustness against a class of initialization errors thanks to the non-trivial basins of attraction that their dissipative nature affords, while retaining the possibility of exact convergence in a finite number of steps that unitary circuits also enjoy. As a main result, we have showed that finite-time dissipative encoders may always be constructed in principle for the important class of stabilizer quantum error-correcting codes. For stabilizer codes which, like Steane's 7-bit code or the 5-bit code, fail to possess gauge degrees of freedom in their subsystem (operator error correction) form, dissipative encoders provide the only means for achieving nontrivial robustness against initialization errors, with the details depending on the structure of the encoder's basin of attraction.  

A number of questions are left to future investigations. From both a conceptual and a practical standpoint, a particularly relevant question is to determine the extent to which finite-time convergence may be affected by relevant kinds of errors in implementing the required encoding maps. While stability results under local perturbations have been established, in the quantum setting, for continuous-time Markovian dynamics exhibiting rapid mixing \cite{Lucia}, no results are available for finite-time-stable discrete-time dynamics to the best of our knowledge.  Also of interest would be the design of explicit dissipative quantum encoders tailored to specific implementation platforms -- notably, circuit QED architectures, for which explicit protocols for implementing quantum maps using an ancilla qubit with QND readout and adaptive control were recently proposed \cite{Liang}.

\acknowledgments
Work at Dartmouth was partially supported by the US NSF under grant No. PHY-1620541 and the US DOE, Office of Science, Office of Advanced Scientific Computing Research, Accelerated Research for Quantum Computing program. G.B. and F.T. have been partially funded by the QUINTET, QFUTURE and QCOS projects of the Universit\`a degli Studi di Padova.
	

\begin{appendix}
\section{Symplectic representation for stabilizers codes}
\label{sympl}

We recall that one can conveniently represents the generators $\{S_{k}\}_{k=1}^{r}$ of a stabilizer group using the so-called symplectic representation and the {\em check matrix} \cite[Sec.~10.5]{Nielsen-Chuang}. For qubits, the latter is an $r\times 2n$ matrix whose rows correspond to the generators $S_1$ through $S_r$, and the columns to the set of generators of the Pauli group $\cP_n$ given by single-qubit $X$ and $Z$ operators. The left-hand side of the matrix contains 1s to indicate which generators contain $X$s, and the right-hand side contains 1s to indicate which generators contain $Z$s; the presence of a 1 on both sides indicates a $Y$ in the generator. 

One of the main interests in this representation is that it allows to check commutativity as a linear algebraic property, through a symplectic product. Let us define the $2n\times2n$ matrix
\[ \Lambda \equiv \begin{bmatrix} 0 & I \\ I & 0 \end{bmatrix}, \]
where the $I$ matrices on the off-diagonals are $n\times n$. It can be shown that two elements of the Pauli group, say $g_1$ and $g_2$, commute, if  and only if the corresponding row vectors  in the check matrix representation, say $G_1$ and $G_2$, satisfy $G_1\Lambda G_2^{\top}=0$ (where arithmetic is done modulo two).

\section{Technical proofs}

\subsection{Co-tensor factor states must be independent of logical information}
\label{app:cofactor}

\begin{lemma}
\label{lem:basis}
Consider two decompositions $\cH_{L'}\otimes \cH_{F'}\oplus \cH_{R'}$ and $\cH_{L''}\otimes \cH_{F''}\oplus \cH_{R''}$ with $\dim(\cH_{L'})=\dim(\cH_{L''})$, and full rank states $\tau_{F'}\in\cD(\cH_{F'})$, $\tau_{F''}\in\cD(\cH_{F''})$. 
Suppose that for any pair of \emph{pure} initializations in $\cH_{L'}$ with co-factor state $\tau_{F'}$, and in $\cH_{L''}$ with co-factor state $\tau_{F''}$, it holds
\begin{align}
\label{eq:compsubpure}
	\begin{cases}
	\Pi_{L'F'}(\rho_{L}\otimes\tau_{F''}\oplus 0_{R''})\Pi_{L'F'}\simeq \rho_{L}\otimes\tilde\tau_{F'}(\rho_{L})\oplus 0_{R'},\\
	\Pi_{L''F''}(\rho_{L}\otimes\tau_{F'}\oplus 0_{R'})\Pi_{L''F''}\simeq \rho_{L}\otimes\tilde\tau_{F''}(\rho_{L})\oplus 0_{R''}
	\end{cases}\ \ \forall \rho_{L}\in\cD(\cH_{L}),
\end{align}
with $\Pi_{L'F'}$, $\Pi_{L''F''}$ as in Definition \ref{def:comp} and $\tilde{\tau}_{F'}(\rho_{L}),\, \tilde{\tau}_{F''}(\rho_{L})\geq 0$ that are, in general, functions of~$\rho_{L}$. Then the two initializations are compatible -- i.e., $\tilde{\tau}_{F'}(\rho_{L}),\, \tilde{\tau}_{F''}(\rho_{L})$ are actually independent of~$\rho_L$.
\end{lemma}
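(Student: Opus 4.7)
The plan is to exploit the linearity of the projection map
\begin{equation*}
\Psi(\rho_L)\equiv \Pi_{L'F'}(\rho_L\otimes\tau_{F''}\oplus 0_{R''})\Pi_{L'F'}
\end{equation*}
in its argument, and to compare it with the \emph{a priori} nonlinear form $\rho_L\otimes\tilde\tau_{F'}(\rho_L)$ appearing on the right-hand side of the first equation in \eqref{eq:compsubpure}; the second equation is handled symmetrically. Throughout I freely use the isomorphism $\simeq$ that identifies the range of $\Psi$ with $\cH_{L'}\otimes\cH_{F'}$.

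First I would fix an orthonormal basis $\{|i\rangle\}_{i=1}^{d}$ of $\cH_L$, set $\tilde\tau_i\equiv \tilde\tau_{F'}(|i\rangle\langle i|)$, and apply the hypothesis to the maximally mixed state $\rho_L = I/d=\tfrac{1}{d}\sum_i |i\rangle\langle i|$. Linearity of $\Psi$ together with the hypothesis for each basis pure state and for the mixed state $I/d$ yields
\begin{equation*}
\frac{1}{d}\sum_{i=1}^d |i\rangle\langle i|\otimes \tilde\tau_i \;=\; \Psi(I/d) \;=\; \frac{I}{d}\otimes \tilde\tau_{F'}(I/d).
\end{equation*}
Sandwiching both sides with $|j\rangle\langle j|\otimes I_{F'}$ isolates the $j$-th diagonal block and gives $\tilde\tau_j = \tilde\tau_{F'}(I/d)$ for every $j$. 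Hence all diagonal co-factor states coincide with a common value $\tilde\tau^\ast\equiv \tilde\tau_{F'}(I/d)$.

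Next I would extend this to arbitrary pure states: for any $|\psi\rangle\in\cH_L$, I embed $|\psi\rangle$ as the first element of a new orthonormal basis and repeat the previous step in that basis. Because $I/d$ is basis-independent, the same universal value $\tilde\tau^\ast$ emerges, and in particular $\tilde\tau_{F'}(|\psi\rangle\langle\psi|)=\tilde\tau^\ast$ independently of $\psi$. For a general mixed $\rho_L=\sum_k q_k|\phi_k\rangle\langle\phi_k|$, spectral decomposition combined with linearity of $\Psi$ gives $\Psi(\rho_L)=\sum_k q_k|\phi_k\rangle\langle\phi_k|\otimes\tilde\tau^\ast=\rho_L\otimes\tilde\tau^\ast$, and matching with the hypothesis $\Psi(\rho_L)=\rho_L\otimes\tilde\tau_{F'}(\rho_L)$ forces $\tilde\tau_{F'}(\rho_L)=\tilde\tau^\ast$, as required. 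The identical argument applied with the roles of the two decompositions swapped shows the same for $\tilde\tau_{F''}$.

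The main obstacle is not the linear algebra itself but making sure the identifications carried by the isomorphism $\simeq$ are compatible with the block-diagonal structure $\oplus 0_{R'}$: one has to verify that projecting onto $|j\rangle\langle j|\otimes I_{F'}$ can be performed either before or after invoking the isomorphism. Since $\Pi_{L'F'}$ already confines the image of $\Psi$ to the $\cH_{L'}\otimes\cH_{F'}$ block, this commutation is immediate, and the rest of the proof reduces to the short linear computation above.
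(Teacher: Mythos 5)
There is a genuine gap at the first step. The hypothesis \eqref{eq:compsubpure} is assumed only for \emph{pure} initializations, i.e., only for $\rho_{L}=\ket{\psi}\bra{\psi}$ — the emphasis on ``pure'' in the statement is essential, since when the lemma is invoked in the proof of Theorem \ref{thm:noise} only the pure-state instances of the condition have been established. Your chain of equalities
\begin{equation*}
\frac{1}{d}\sum_{i}\ket{i}\bra{i}\otimes\tilde\tau_{i}\;=\;\Psi(I/d)\;=\;\frac{I}{d}\otimes\tilde\tau_{F'}(I/d)
\end{equation*}
uses the hypothesis at the \emph{mixed} state $I/d$ to justify the second equality. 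That instance is not granted; indeed, the assertion that $\Psi(\rho_{L})$ has product form for mixed $\rho_{L}$ is part of what the lemma is supposed to deliver, so invoking it here is circular. Without that equality, linearity only gives $\Psi(I/d)=\frac{1}{d}\sum_{i}\ket{i}\bra{i}\otimes\tilde\tau_{i}$, which places no constraint on whether the $\tilde\tau_{i}$ coincide — any collection of positive $\tilde\tau_{i}$ is consistent with it. Since your second paragraph (arbitrary pure states) and third paragraph (general mixed states) both build on the conclusion $\tilde\tau_{i}=\tilde\tau^{\ast}$ of this first step, the argument does not go through as written.

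The missing idea is to feed the pure-state hypothesis a state that couples two basis directions \emph{coherently}. The paper takes the pure state $\rho_{ij}=(a_{i}\ket{\psi_{i}}+a_{j}\ket{\psi_{j}})(a_{i}\bra{\psi_{i}}+a_{j}\bra{\psi_{j}})$, for which \eqref{eq:compsubpure} gives $\Pi_{L'F'}(\rho_{ij}\otimes\tau_{F''}\oplus 0_{R''})\Pi_{L'F'}\simeq\rho_{ij}\otimes\tilde\tau_{F'ij}\oplus 0_{R'}$ with a single co-factor $\tilde\tau_{F'ij}$, and then compares the $\ket{\psi_{i}}$ and $\ket{\psi_{j}}$ diagonal blocks of this identity with those obtained from the hypothesis applied to $\ket{\psi_{i}}\bra{\psi_{i}}$ and $\ket{\psi_{j}}\bra{\psi_{j}}$ separately; this forces $\tilde\tau_{F'i}=\tilde\tau_{F'ij}=\tilde\tau_{F'j}$. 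A repair in the spirit of your own approach is possible: since $I/d$ resolves as $\frac{1}{d}\sum_{i}\ket{\psi_{i}}\bra{\psi_{i}}$ in \emph{every} orthonormal basis, applying the pure-state hypothesis basis by basis and equating the two resulting expressions for $\Psi(I/d)$ (for the original basis and for one rotated within a two-dimensional subspace) again forces the co-factors to agree — but this amounts to the same use of superposition states, and cannot be replaced by a single basis plus an appeal to the mixed-state case.
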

\begin{proof}
We first show that if the first condition in Eq. \eqref{eq:compsubpure} holds true, then the same condition holds true for every state $\rho_{L}\in\cD(\cH_{L})$ with $\tilde{\tau}_{F'}(\rho_{L})$ independent of $\rho_{L}$. To this aim, let $\{\ket{\psi_{i}}\}_{i}$ be an orthonormal basis in $\cH_{L}$ and define $\rho''_{i}\simeq\ket{\psi_{i}}\bra{\psi_{i}}\otimes\tau_{F''}\oplus 0_{R''}$, $\tilde{\rho}_{i}'\simeq\ket{\psi_{i}}\bra{\psi_{i}}\otimes\tilde\tau_{F'i}\oplus 0_{R'}$. By assumption, we have
\begin{align*}
	\Pi_{L'F'} (\alpha \rho''_{i}+\beta \rho''_{j})\Pi_{L'F'} &= \alpha \tilde{\rho}_{i}' + \beta \tilde{\rho}_{j}'\\
	& \simeq \alpha (\ket{\psi_{i}}\bra{\psi_{i}}\otimes\tilde\tau_{F'i}\oplus 0_{R'})+ \beta  (\ket{\psi_{j}}\bra{\psi_{j}}\otimes\tilde\tau_{F'j}\oplus 0_{R'})
\end{align*}
for all $i,j$ and $\alpha,\beta>0$ such that $\alpha+\beta=1$. We next show that $\tilde\tau_{F'i}=\tilde\tau_{F'j}$ for all $i\neq j$. To this aim, let us define $\rho_{ij}=(a_{i} \ket{\psi_{i}}+a_{j} \ket{\psi_{i}})(a_{i} \bra{\psi_{i}}+a_{j} \bra{\psi_{i}})$, with $a_{i},a_{j}\in\mathbb{C}$ such that $|a_{i}|^{2}+|a_{j}|^{2}=1$. Since $\rho_{ij}$ is pure, in view of the first condition in \eqref{eq:compsubpure}, it holds
\begin{align*}
	\Pi_{L'F'}(\rho_{ij}\otimes\tau_{F''}\oplus 0_{R''})\Pi_{L'F'} = \tilde{\rho}'_{ij}\simeq(\rho_{ij}\otimes\tilde\tau_{F'ij}\oplus 0_{R'}) ,
\end{align*}
so that 
\begin{align}\label{eq:ab}
	\Pi_{L'F'}\Pi_{k}(\rho_{ij}\otimes\tau_{F''}\oplus 0_{R''})\Pi_{k}\Pi_{L'F'} \simeq |a_{k}|^{2} (\rho_{k}\otimes\tilde\tau_{F'ij}\oplus 0_{R'}),
\end{align}
where $\Pi_{k}\simeq(\ket{\psi_{k}}\bra{\psi_{k}}\otimes I_{F''}\oplus0_{R''})$, $k=i,j$.
On the other hand, in view of the definition of $\rho_{ij}$, it also holds
\begin{align}\label{eq:ab2}
	\Pi_{L'F'}\Pi_{k}(\rho_{ij}\otimes\tau_{F''}\oplus 0_{R''})\Pi_{k}\Pi_{L'F'} \simeq  |a_{k}|^{2}(\rho_{k}\otimes\tilde\tau_{F'k}\oplus 0_{R'}),\quad k=i,j.
\end{align}
Eventually, a comparison of Eqs. \eqref{eq:ab} and \eqref{eq:ab2} yields the desired conclusion, that is, 
\(
\tilde{\tau}_{F'ij}=\tilde\tau_{F'i}=\tilde\tau_{F'j},$ for all $i,j, \,i\neq j.\)
We can apply the same argument as before to show that if the second condition in \eqref{eq:compsubpure} holds true for all pure $\rho_{L}$, then the same condition holds true for every, pure or mixed, $\rho_{L}$ with $\tilde{\tau}_{F''}(\rho_{L})$ independent of $\rho_{L}$. Hence, we proved that $\tilde{\tau}_{F'}(\rho_{L}),\, \tilde{\tau}_{F''}(\rho_{L})$ in Eq. \eqref{eq:compsubpure} are actually independent of $\rho_L$, or, equivalently, that the two initializations are compatible.
\qed\end{proof}

\subsection{Proof of Theorem 1}
\label{app:main}

\begin{proof}
We first address necessity, i.e., ``$\exists$ $\Phi_P$ that tolerates $\cN$ $\implies$ the initializations $\rho_{P'}$ and $\rho_{P'',\lambda}$ are compatible for all $\lambda\in[0,1]$''. To this aim, we first recall that if there exists  $\Phi_P$ correcting $\cN$ then $\Phi_P$ also corrects any convex combination of the noise and the identity operator, namely $\cM_{\lambda}=\lambda\cN+(1-\lambda)\cI$, $\lambda\in[0,1]$ (Lemma \ref{lem:convex}). Let $\{\ket{\psi_{i}}\}_{i}$ be any orthonormal basis in $\cH_{L}$ and $\rho_{P',i}\simeq \ket{\psi_{i}}\bra{\psi_{i}}\otimes\tau_{F'}\oplus 0_{R'}$. We will prove that for all $\lambda\in[0,1]$,
\begin{align}\label{eq:trij}
	\tr(\rho_{P',i}\cM_{\lambda}(\rho_{P',j}))=0,\quad \forall\,i,j,\ i\neq j.
\end{align}
Notice that if \eqref{eq:trij} holds true, then
\begin{align*}
	& \rho_{P',i}\cM_{\lambda}(\rho_{P',j}) \rho_{P',i}=0,\quad \forall\,i,j,\ i\neq j,
\end{align*}
which in turn implies
\begin{align*}
	\Pi_{L'F'}\cM_{\lambda}(\rho_{P',j})\Pi_{L'F'} = \rho_{P',j},\quad \forall\, j,
\end{align*}
where $\Pi_{L'F'}\simeq I_{L'}\otimes  I_{F'}\oplus 0_{R'}$ is the orthogonal projection onto $\cH_{L'}\otimes \cH_{F'}$.

In order to prove Eq. \eqref{eq:trij} pick any $\lambda\in[0,1]$ and assume, by contradiction, that there exists a pair $(i, j)$, $i\neq j$, such that 
$	\tr(\rho_{P',i}\cM_{\lambda}(\rho_{P',j}))\neq 0.$ The latter condition implies that $$\cM_{\lambda}(\rho_{P',j})\geq \rho_{P',j}^{(\gamma)} \simeq \ket{\psi_{i}}\bra{\psi_{i}}\otimes\gamma\oplus 0_{R'},$$ for suitable $\gamma\geq 0$ (positive semidefinite but not necessarily of unit trace). Notice also that $\rho_{P',i}\geq c\rho_{P',j}^{(\gamma)}$ for suitable $c\in\mathbb{R}$, $c>0$, since $\tau_{F'}$ is of full rank. Moreover, since $\Phi_P$ is a positive map, it preserves the partial ordering, i.e. $\Phi_P(X)\geq \Phi_P(Y)$ if $X\geq Y\geq 0$. Therefore, we have
\begin{align*}
	\tr(\Phi_P(\rho_{P',i})\Phi_P(\cM_{\lambda}(\rho_{P',j}))) & \geq \tr(\tilde{\Phi}_P(\rho_{P',i})\tilde{\Phi}_P(\cM_{\lambda}(\rho_{P',j})))\\
							          & \geq  c\,\tr(\tilde{\Phi}_P(\rho_{P',j}^{(\gamma)})^{2})>0,
\end{align*}
where $\tilde{\Phi}_P(X)=\Phi_P(\Pi_i X \Pi_i)$, $\Pi_i\simeq \ket{\psi_{i}}\bra{\psi_{i}}\otimes  I_{F'} \oplus 0_{R'}$. Hence, we proved that, for $i\neq j$,
\begin{align}\label{eq:cEneq}
	\tr(\Phi_P(\rho_{P',i})\Phi_P(\cM_{\lambda}(\rho_{P',j})))\neq 0.
\end{align} 
Now, since $\Phi_P$ tolerates $\cN$ and, therefore, $\cM_{\lambda}$ for $\lambda\in[0,1]$ (Lemma \ref{lem:convex}), we have $\Phi_P(\cM_{\lambda}(\rho_{P',j}))=\Phi_P(\rho_{P',j})$. In view of the latter fact and of Eq. \eqref{eq:cEneq}, it follows that
\( \tr(\Phi_P(\rho_{P',i})\Phi_P(\rho_{P',j}))\neq 0,\)  \(i\neq j, \)
which, in turn, implies that $\Phi_P$ is not orthogonality-preserving. This contradicts the fact that $\Phi_P$ must be an isometry \cite[Prop. 2.1]{busch}. The same argument outlined above holds if we consider $\rho_{P'',\lambda,i}\simeq \ket{\psi_{i}}\bra{\psi_{i}}\otimes\tau_{F'',\lambda}\oplus 0_{R''}$, and replace Eq. \eqref{eq:trij} with
\begin{align}
\label{eq:trijnew}
	\tr(\rho_{P'',\lambda,i}\cR(\rho_{P'',\lambda,j}))=0,\quad \forall\,i,j,\ i\neq j,
\end{align}
where $\cR$ is an isometric map from $\cH_{P}$ to $\cH_{P}$, mapping $\rho_{P'',\lambda}$ to $\rho_{P'}$.
Consequently, we have that both Eqs. \eqref{eq:trij} and \eqref{eq:trijnew} hold. These condition in turn imply that, for every basis $\{\ket{\psi_{i}}\}_{i}$,
\begin{align*}
	 \Pi_{L'F'}\rho_{P'',\lambda,j}\Pi_{L'F'} &\simeq \ket{\psi_{j}}\bra{\psi_{j}}\otimes\tilde{\tau}_{F'j}\oplus 0_{R'},\quad \forall\, j,\\
	 \Pi_{L''F''}\rho_{P',j}\Pi_{L''F''} &\simeq \ket{\psi_{j}}\bra{\psi_{j}}\otimes\tilde{\tau}_{F'',\lambda j}\oplus 0_{R''},\quad \forall\, j.
\end{align*}
In view of this and of Lemma \ref{lem:basis}, we conclude that the initializations $\rho_{P'}$ and $\rho_{P'',\lambda}$ in the subsystem decompositions  $\cH_{L'}\otimes \cH_{F'}\oplus \cH_{R'}$ and $\cH_{L''}\otimes \cH_{F'',\lambda}\oplus \cH_{R'',\lambda}$ are compatible for all $\lambda\in[0,1]$.

\medskip

We next address sufficiency, i.e., ``the initializations $\rho_{P'}$ and $\rho_{P'',\lambda}$ are compatible for all $\lambda\in[0,1]$ $\implies$ $\exists$ $\Phi_P$ that corrects $\cN$''. To this aim, pick $\hat\lambda\in(0,1)$ and consider the CPTP map
\begin{align}\label{eq:Elambda}
	\Phi_{\hat\lambda}\equiv \cI_{L''}\otimes \cF_{P'',\hat\lambda}\oplus \cI_{R'',\hat\lambda}
\end{align}
such that $\cF_{P'',\hat\lambda}(\sigma)=\tau_{F}$ for all $\sigma\in\cD(\cH_{F'',\hat{\lambda}})$. The previous map is an isometry which tolerates the map $\cM_{\hat{\lambda}}=\hat\lambda\cN+(1-\hat\lambda)\cI$, that is $\Phi_{\hat\lambda}(\rho_{P'',\hat\lambda})\simeq \rho_{L}\otimes \tau_{F}\oplus 0_{R}$. We want to show that $\Phi_{\hat{\lambda}}$ tolerates also the noise $\cN$, that is $\Phi_{\hat\lambda}(\rho_{P''}) \simeq \rho_{L}\otimes \tau_{F}\oplus 0_{R}$. We first notice that, since $\hat\lambda\in(0,1)$, we have \(\Pi_{L'F'}\leq \Pi_{L''F'',\hat{\lambda}},\) where $\Pi_{L''F'',\hat\lambda}$ denotes the orthogonal projection onto $\cH_{L''}\otimes \cH_{F'',\lambda}$. In view of this fact and since $\rho_{P'}$ and $\rho_{P'',\hat\lambda}$ are compatible by assumption, it holds
\begin{align*}
\Pi_{L''F'',\hat\lambda}\rho_{P'}\Pi_{L''F'',\hat\lambda}\simeq \rho_{L}\otimes \tilde{\tau}_{F'',\hat\lambda}\oplus 0_{R'',\hat\lambda},\quad \forall \rho_{L}\in\cD(\cH_{L}).
\end{align*}
Hence, we have
$\Phi_{\hat\lambda}(\rho_{P'})\simeq\rho_{L}\otimes \tau_{F}\oplus 0_{R},$ $\forall \rho_{L}\in\cD(\cH_{L}).$
Eventually, by linearity of $\Phi_{\hat\lambda}$ and in view of the definition of $\rho_{P'',\hat\lambda}$,
\begin{align*}
\Phi_{\hat\lambda}(\rho_{P''})&=\Phi_{\hat\lambda}\Big(\frac{1}{\hat\lambda}\rho_{P'',\hat\lambda}-\frac{1-\hat\lambda}{\hat\lambda}\rho_{P'}\Big)
	=\frac{1}{\hat\lambda}\Phi_{\hat\lambda}(\rho_{P'',\hat\lambda})-\frac{1-\hat\lambda}{\hat\lambda}\Phi_{\hat\lambda}(\rho_{P'})\\
	&\simeq \rho_{L}\otimes \tau_{F}\oplus 0_{R},
\end{align*}
for all $\rho_{L}\in\cD(\cH_{L})$, whereby the desired conclusion follows. \qed
\end{proof}

\end{appendix}

\end{document}